\documentclass[copyright,creativecommons]{eptcs}

\usepackage[english]{babel}
\usepackage[latin1]{inputenc}
\usepackage{amsmath,amssymb,amsfonts,mathrsfs,latexsym,stmaryrd}
\usepackage{array}
\usepackage{url}
\usepackage[newitem,newenum,neverdecrease]{paralist}
\usepackage{ifpdf}
\usepackage{color}
\definecolor{darkcyan}{rgb}{0,0.55,0.55}
\usepackage[OT1]{eulervm}
\usepackage{times}
\usepackage{graphicx}
\usepackage{latexsym}
\usepackage{fancybox}
\usepackage{tikz}
\usetikzlibrary{decorations,arrows,shapes}
\usepgflibrary{decorations.pathreplacing} 
\usepgflibrary[decorations.pathreplacing] 
\usetikzlibrary{decorations.pathreplacing} 
\usetikzlibrary[decorations.pathreplacing]
\usepackage{stackrel}
\usepackage{xspace}
\usepackage{amsthm}

\newtheorem{definition}{Definition}

\newtheorem{theorem}{Theorem}
\newtheorem{lemma}{Lemma}

\newcommand{\mmodels}{\Vdash}

\newcommand{\hs}{{\rm HS}\xspace}     

\newcommand{\finishes}{\ensuremath{\langle E \rangle}\xspace}
\newcommand{\during}{\ensuremath{\langle D \rangle}\xspace}
\newcommand{\starts}{\ensuremath{\langle B \rangle}\xspace}
\newcommand{\overlaps}{\ensuremath{\langle O \rangle}\xspace}
\newcommand{\meets}{\ensuremath{\langle A \rangle}\xspace}

\newcommand{\Tfinishes}{\ensuremath{\langle \overline{E} \rangle}\xspace}
\newcommand{\Tduring}{\ensuremath{\langle \overline{D} \rangle}\xspace}
\newcommand{\Tstarts}{\ensuremath{\langle \overline{B} \rangle}\xspace}
\newcommand{\Toverlaps}{\ensuremath{\langle \overline{O} \rangle}\xspace}
\newcommand{\Tmeets}{\ensuremath{\langle \overline{A} \rangle}\xspace}

\newcommand{\later}{\ensuremath{\langle L \rangle}\xspace}
\newcommand{\Tlater}{\ensuremath{\langle \overline{L} \rangle}\xspace}

\newcommand{\Gfinishes}{\ensuremath{[ E ]}\xspace}

\newcommand{\Gmeets}{\ensuremath{[ A ]}\xspace}

\newcommand{\TGmeets}{\ensuremath{[ \overline{A} ]}\xspace}

\newcommand{\Glater}{\ensuremath{[ L ]}\xspace}

\newcommand{\blank}{\ensuremath{\mathsf{\$b}}\xspace}

\newlength{\partextlength}
\setlength{\partextlength}{4cm}
\newcommand{\partext}[1]{\text{\parbox[c]{\partextlength}{\em #1}}}

\allowdisplaybreaks[4]

\usepackage{breakurl}             

\title{Interval Temporal Logics over Strongly Discrete\\ Linear Orders: the Complete Picture}

\author{Davide Bresolin
\institute{University of Verona (Italy)
\email{davide.bresolin@univr.it}}
\quad\and
Dario Della Monica
\institute{Reykjavik University (Iceland)
\email{dariodm@ru.is}}
\and
Angelo Montanari
\institute{University of Udine (Italy)
\email{angelo.montanari@uniud.it}}
\and
Pietro Sala
\institute{University of Verona (Italy)
\email{pietro.sala@univr.it}}
\and
Guido Sciavicco
\institute{University of Murcia (Spain)
\email{guido@um.es}}
}

\begin{document}

\maketitle

\begin{abstract}
Interval temporal logics provide a general framework for
temporal reasoning about interval structures over linearly
ordered domains, where intervals are taken as the primitive
ontological entities. In this paper, we identify all fragments
of Halpern and Shoham's interval temporal logic \hs\ with
a decidable satisfiability problem over the class of strongly
discrete linear orders. We classify them in terms of both their
relative expressive power and their complexity.
We show that there are exactly 44
expressively different decidable fragments, whose complexity
ranges from NP to EXPSPACE. In addition, we identify some new
undecidable fragments (all the remaining \hs\ fragments were
already known to be undecidable over strongly discrete linear
orders). We conclude the paper by an analysis of the specific case of
natural numbers, whose behavior slightly differs from that of
the whole class of strongly discrete linear orders.
The number of decidable fragments over $\mathbb N$
raises up to 47: three undecidable fragments become decidable
with a non-primitive recursive complexity.
\end{abstract}

\section{Introduction}

Interval temporal logics provide a general framework for
temporal reasoning about interval structures over linearly
(or partially) ordered domains. They take time intervals as
the primitive ontological entities and define truth of formulas
relative to time intervals, rather than time points. Interval
logic modalities correspond to various relations between pairs
of intervals, with the exception of Venema's CDT and its fragments,
that consider ternary relations~\cite{JLOGC::Venema1991}.
In particular, Halpern and Shoham's modal logic
of time intervals \hs~\cite{HalpernS91} features a set of
modalities that makes it possible to express all Allen's
interval relations~\cite{allen83} (see Table~\ref{tab:relations}).

Interval-based formalisms have been extensively used in many
areas of computer science, such as, for instance, planning,
natural language processing, constraint satisfaction, and
verification of hardware and software systems. However, most
of them impose severe syntactic and semantic restrictions that
considerably weaken their expressive power.
Interval temporal logics relax these restrictions, allowing one
to cope with much more complex application domains and scenarios.
Unfortunately, many of them, including \hs\ and the majority of
its fragments, turn out to be undecidable~\cite{lpar08}.
\begin{table}[t]
\centering
\begingroup
\newcommand{\albedoplot}{%
\begin{tikzpicture}[scale=0.7,font=\footnotesize]
\useasboundingbox (0,0) rectangle (3.5,0);
\draw[dotted,help lines,thick] (0,-.25) -- ++(0,-5.05);
\draw[dotted,help lines,thick] (2,-.25) -- ++(0,-5.05);
\draw[red,|-|,thick] (0,0) node[above=1pt](a){$x$} -- (2,0)node[above](b){$y$};
\end{tikzpicture}}
\renewcommand{\arraystretch}{1.25}
\begin{tabular}{|c|c|l|c|}
\hline
\bf Relation & \bf Operator & \,\hfill\bf Formal definition\hfill\, & \bf Pictorial example \\
\hline
										&									&		& \albedoplot \\
\em meets						&	$\meets$					& $[x,y] R_A [x',y'] \Leftrightarrow y=x'$ &
	\tikz[scale=0.7,font=\footnotesize]{\useasboundingbox (0,-.1) rectangle (3.5,0.1);
 		\draw[|-|,thick] (2,0)node[above](c){$x'$}-- ++(1,0)node[above](d){$y'$};} \\
\em before						&	$\later$					& $[x,y] R_L [x',y'] \Leftrightarrow y < x'$ &
	\tikz[scale=0.7,font=\footnotesize]{\useasboundingbox (0,-.1) rectangle (3.5,0.1);
 		\draw[|-|,thick] (2.5,0)node[above](c){$x'$}-- ++(1,0)node[above](d){$y'$};} \\
\em started-by				&	$\starts$				& $[x,y] R_B [x',y'] \Leftrightarrow x=x', y' < y$ &
	\tikz[scale=0.7,font=\footnotesize]{\useasboundingbox (0,-.1) rectangle (3.5,0.1);
 		\draw[|-|,thick] (0,0)node[above](c){$x'$}-- ++(0.75,0)node[above](d){$y'$};} \\
\em finished-by			&	$\finishes$			& $[x,y] R_E [x',y'] \Leftrightarrow y=y', x < x'$ &
	\tikz[scale=0.7,font=\footnotesize]{\useasboundingbox (0,-.1) rectangle (3.5,0.1);
 		\draw[|-|,thick] (1.25,0)node[above](c){$x'$}-- ++(0.75,0)node[above](d){$y'$};} \\
\em contains					&	$\during$				& $[x,y] R_D [x',y'] \Leftrightarrow x < x', y' < y$ &
	\tikz[scale=0.7,font=\footnotesize]{\useasboundingbox (0,-.1) rectangle (3.5,0.1);
 		\draw[|-|,thick] (0.5,0)node[above](c){$x'$}-- ++(1,0)node[above](d){$y'$};} \\
\em overlaps					&	$\overlaps$			& $[x,y] R_O [x',y'] \Leftrightarrow x < x' < y < y'$ &
	\tikz[scale=0.7,font=\footnotesize]{\useasboundingbox (0,-.1) rectangle (3.5,0.1);
 		\draw[|-|,thick] (1,0)node[above](c){$x'$}-- ++(2,0)node[above](d){$y'$};} \\
\hline
\end{tabular}
\endgroup
\caption{Allen's interval relations and the corresponding~\hs\ modalities.}
\label{tab:relations}
\end{table}

In this paper, we focus our attention on the class of strongly discrete linear orders,
that is, of those linear structures characterized by the presence of finitely
many points in between any two points. This class includes, for instance, $\mathbb N$,
$\mathbb Z$, and all finite linear orders.
%
%
We give a complete classification of all \hs\ fragments
(defined by restricting the set of modalities), reviewing known
results and solving open problems; the results differ, as we will see,
from those in the class of all finite linearly ordered sets~\cite{ecai2012}.
The aim of such a classification is twofold: on the one hand, we identify the subset of
all expressively-different decidable fragments, thus marking the decidability border;
on the other hand, we determine the exact complexity of each of them.
As shown in Figure~\ref{fig:expre}, $\mathsf{A\overline AB\overline B}$ (that
features modal operators for Allen's relations {\em meets} and {\em started-by}, and
their inverses) and its mirror image $\mathsf{A\overline AE\overline E}$ (that replaces
relations {\em starts} and {\em started-by} by relations {\em finishes} and
{\em finished-by}) are the minimal fragments including all decidable subsets
of operators from the \hs repository, for a total of 62 languages. Of those,
44 turn out to be decidable.
As a matter of fact, the status of various fragments was already known:
\begin{inparaenum}[\it (i)]
\item $\mathsf{D}$, $\mathsf{\overline D}$, $\mathsf{O}$, and
$\mathsf{\overline O}$ have been shown to be undecidable
in~\cite{DBLP:conf/time/BresolinMGMS11,DBLP:conf/lics/MarcinkowskiM11};
\item $\mathsf{BE}$, $\mathsf{B\overline E}$, $\mathsf{\overline BE}$, and
$\mathsf{\overline B\overline E}$ are undecidable, as they can define, respectively,
$\during$ (by the equation $\during p \equiv \starts \finishes p$),
$\Toverlaps$ ($\Toverlaps p \equiv \starts \Tfinishes p$),
$\overlaps$ ($\overlaps p \equiv \finishes \Tstarts p$),
and $\Tduring$ ($\Tduring p \equiv \Tstarts \Tfinishes p$);
\item undecidability of $\mathsf{A \overline A \overline B}$
(resp., $\mathsf{A\overline A\overline E}$) can be shown using the same
technique used in~\cite{DBLP:conf/icalp/MontanariPS10} to prove the
undecidability of $\mathsf{A\overline AB}$ (resp., $\mathsf{A\overline AE}$);
\item $\mathsf{AB\overline B\overline L}$ (resp., $\mathsf{\overline
AE\overline EL}$) is in EXPSPACE~\cite{ijfcs2012}, and the
proof of EXPSPACE-hardness for $\mathsf{AB}$ and $\mathsf{A \overline B}$
(resp., $\mathsf{\overline AE}$ and $\mathsf{\overline A \overline E}$)
over finite linear orders~\cite{ecai2012} can be easily adapted to
the case of strongly discrete linear orders;
\item $\mathsf{A\overline A}$ (a.k.a. Propositional Neighborhood Logic)
is in NEXPTIME~\cite{DBLP:conf/tableaux/BresolinMSS11,Goranko03a},
and NEXPTIME-hardness already holds for $\mathsf A$ and $\mathsf{\overline
A}$~\cite{bresolin07b};
\item $\mathsf{B\overline B}$ is NP-complete~\cite{Goranko04},
and, obviously, NP-hardness already holds for $\mathsf B$ and
$\mathsf{\overline B}$ (both include propositional logic);
\item the relative expressive power of the \hs\ fragments we are
interested in is as shown in Figure~\ref{fig:expre},
whose soundness and completeness follow from the results given
in~\cite{DBLP:conf/ijcai/MonicaGMS11} and in~\cite{ecai2012},
respectively, as definability (resp., undefinability) results
transfer from more (resp., less) general to less (resp.,
more) general classes.
\end{inparaenum}

In this paper, we complete the picture by proving the following new results:
\begin{inparaenum}[\it (i)]
\item the undecidability of $\mathsf{A\overline AB}$ (resp., $\mathsf{A\overline AE}$)
and $\mathsf{A\overline A \overline B}$ (resp., $\mathsf{A\overline A\overline E}$)
can be sharpened to $\mathsf{\overline AB}$ (resp., $\mathsf{AE}$) and
$\mathsf{\overline A\overline B}$ (resp., $\mathsf{A\overline E}$), respectively
(Section 3);
\item the NP-completeness (in particular, NP-membership) of $\mathsf{B\overline B}$
can be extended to $\mathsf{B\overline BL\overline L}$ (Section \ref{sec:NP}).
\end{inparaenum}
In addition, we analyze the behavior of the various fragments over interesting
sub-classes of the class of all strongly discrete linearly ordered sets, taking
as an example that of models based on $\mathbb N$ (Section 6).
As $\mathbb N$-models are not left/right symmetric, reversing the time order
and coherently replacing modalities (e.g., $\meets$ by $\Tmeets$) does not preserve,
in general, the computational properties of a fragment. We show that:
\begin{inparaenum}[\it (i)]
\item $\mathsf{\overline AB}$ becomes decidable (which is a direct consequence
of~\cite{DBLP:conf/icalp/MontanariPS10}), precisely, non-primitive recursive
\cite{ecai2012};
\item the same holds for $\mathsf{\overline A\overline B}$ and $\mathsf{\overline
AB\overline B}$, but, in these cases, the decidability proof for $\mathsf{A\overline A
B\overline B}$ given in~\cite{DBLP:conf/icalp/MontanariPS10} must be suitably adapted;
\item $\mathsf{\overline ABL}$, $\mathsf{\overline A\overline BL}$, and $\mathsf{\overline
AB\overline BL}$ remain undecidable, but the original reductions must be
suitably adapted.
\end{inparaenum}
Thus, the number of decidable fragments over $\mathbb N$ raises up to 47, the
three new decidable fragments being all non-primitive recursive.
%
%
In fact, we can slightly generalize such a result, as the addition of finite
linear orders (finite prefixes of $\mathbb N$) to $\mathbb N$ does not alter the
%
%
picture; however, to keep presentation and proofs as simple as possible, we
restrict our attention to $\mathbb N$-models only. Symmetric results can
be obtained in the case of negative integers.

\section{\hs and its Fragments}
\label{sec:pre}

Let $\mathbb{D} = \langle D, <\rangle$ be a \emph{strongly discrete linearly
ordered set}, that is, a linearly ordered set where for every pair $x,y$, with
$x<y$, there exist at most finitely many $z_1,z_2,\ldots,z_n$
such that $x<z_1<z_2<\ldots<z_n<y$.
%
%
According to the strict approach, we exclude intervals with coincident
endpoints (point-intervals) from the semantics: an \emph{interval} over
$\mathbb{D}$ is an ordered pair $[x,y]$, with $x,y \in D$ and $x < y$.

12 different ordering relations (plus equality) between any pair of intervals
are possible, often called \emph{Allen's relations}~\cite{allen83}:
the six relations depicted in Table~\ref{tab:relations} and their inverses.
We interpret interval structures as Kripke structures and Allen's relations
as accessibility relations, thus associating a modality $\langle X\rangle$
with each Allen's relation $R_{X}$.
For each modality $\langle X \rangle$, its \emph{inverse} (or {\em transpose}),
denoted by $\langle \overline{X} \rangle$, corresponds to the inverse relation
$R_{\overline{X}}$ of $R_{X}$ (that is, $R_{\overline{X}} = (R_{X})^{-1}$).
Halpern and Shoham's logic \hs\ is a multi-modal logic whose formulas are built
on a set $\mathcal{AP}$ of proposition letters, the boolean connectives
$\vee$ and $\neg$, and one modality for each Allen's relation.
We associate a fragment $\mathsf{X_1 X_2 \ldots X_k}$ of \hs with every
subset $\{R_{X_1}, \ldots, R_{X_k}\}$ of Allen's relations, whose formulas
are defined by the following grammar:
$$ \varphi ::= p \mid
\neg \varphi \mid \varphi \vee \varphi \mid \langle
X_1\rangle\varphi\mid\ldots\mid\langle X_k\rangle\varphi.
$$
The other boolean connectives can be viewed as abbreviations, and the dual
operators $[X]$ are defined as usual ($[X] \varphi \equiv \neg
\langle X \rangle \neg \varphi$). Given a formula $\varphi$, its {\em length}
$|\varphi|$ is the number of its symbols.

The semantics of \hs\ is given in terms of {\em interval models}  $M
= \langle\mathbb{I(D)},V\rangle$, where $\mathbb{I(D)}$ is the set of all
intervals over $\mathbb{D}$. The \emph{valuation function} $V : \mathcal{AP} \mapsto 2^{\mathbb{I(D)}}$
assigns to every $p \in \mathcal{AP}$ the set of intervals $V(p)$
over which $p$ holds. The {\em truth} of a formula over a given interval
$[x,y]$ of an interval model $M$ is defined by structural induction on
formulas:
\medskip
\begin{compactitem}
\item $M,[x,y] \mmodels p$ iff $[x,y]\in V(p)$, for all $p \in
      \mathcal{AP}$;
\item $M,[x,y] \mmodels \neg\psi$ iff it is not the
      case that $M,[x,y]\mmodels\psi$;
\item $M,[x,y]\mmodels \varphi \vee \psi $ iff
      $M,[x,y]\mmodels\varphi$ or
      $M,[x,y]\mmodels\psi$;
\item $M,[x,y] \mmodels \langle X\rangle\psi$ iff there exists an interval
        $[x',y']$ such that $[x,y]R_{X}[x',y']$ and $M,[x',y']\mmodels\psi$,
        where $R_{X}$ is the relation corresponding to $\langle X\rangle$.
\end{compactitem}
\medskip
An \hs-formula $\phi$ is {\em valid}, denoted by $\mmodels \phi$, if it
is true over every interval of every interval model.

\begin{figure*}[tbp]
   \centering
		\input{disc_dec_fragm}
   \caption{Hasse diagram of fragments of $\mathsf{A\overline
   AB\overline B}$ and $\mathsf{A\overline AE\overline E}$
   over strongly discrete linear orders.}\label{fig:expre}
\end{figure*}

In this paper, we study expressiveness and computational complexity of \hs\
fragments over the class of strongly discrete linear orders.
Given a fragment $\mathcal F = \mathsf{X_1 X_2 \ldots X_k}$ and a modality
$\langle X \rangle$, we write $\langle X \rangle \in \mathcal F$ if $X \in \{X_1,
\ldots, X_k\}$. Given two fragments  $\mathcal F_1$ and $\mathcal F_2$, we write
$\mathcal F_1 \subseteq \mathcal F_2$ if $\langle X \rangle \in \mathcal F_1$
implies $\langle X \rangle \in \mathcal F_2$, for every modality $\langle
X \rangle$.

\begin{definition}
We say that an \hs\ modality $\langle X\rangle$ is \emph{definable} in an
\hs\ fragment $\mathcal F$ if there exists a formula $\psi(p) \in \mathcal F$
such that $\langle X\rangle p \leftrightarrow \psi(p)$ is valid, for any fixed
proposition letter $p$. In such a case, the equivalence $\langle X \rangle p
\equiv \psi(p)$ is called an \emph{inter-definability equation for $\langle X
\rangle$ in $\mathcal F$}.
\end{definition}

\begin{definition}
Let $\mathcal F_1$ and $\mathcal F_2$ be two \hs\ fragments. We say
that
\begin{inparaenum}[\it (i)]
\item $\mathcal F_2$ is \emph{at least as expressive as} $\mathcal F_1$
($\mathcal F_1 \preceq \mathcal F_2$) if every modality $ \langle
X\rangle \in \mathcal F_1$ is definable in $\mathcal F_2$;
\item $\mathcal F_1$ is \emph{strictly less expressive} than $\mathcal F_2$,
($\mathcal F_1 \prec \mathcal F_2$) if $\mathcal F_1 \preceq \mathcal F_2$,
but not $\mathcal F_2 \preceq \mathcal F_1$;
\item  $\mathcal F_1$ and $\mathcal F_2$ are \emph{equally expressive}, or
\emph{expressively equivalent} ($\mathcal F_1 \equiv \mathcal F_2$),
if $\mathcal F_1 \preceq \mathcal F_2$ and $\mathcal F_2 \preceq \mathcal F_1$;
\item  $\mathcal F_1$ and $\mathcal F_2$ are \emph{expressively incomparable}
($\mathcal F_1 \not\equiv \mathcal F_2$) if neither $\mathcal F_1
\preceq \mathcal F_2$ nor $\mathcal F_2 \preceq \mathcal F_1$.
\end{inparaenum}
\end{definition}

We denote each \hs\ fragment $\mathcal F$ by the list of its modalities in alphabetical
order, omitting those modalities which are definable in terms of the others.
As a matter of fact, in our setting, only $\later $ and $\Tlater$
turn out to be definable in some fragments. Any fragment $\mathcal F$ can be
transformed into its mirror image by reversing the time order and simultaneously
replacing (each occurrence of) $\meets$ by $\Tmeets$, $\later$ by $\Tlater$,
$\starts$ by $\finishes$, and $\Tstarts$ by $\Tfinishes$.
In the considered class of linear orders, the mirroring operation can be applied
to any fragment preserving all its computational properties. Thus, all results
given in this paper, except for the ones in Section 6, hold both for the considered
fragments and their mirror images. When the considered class of
models is not left/right symmetric, as it happens with $\mathbb N$ (Section 6), this is no
longer true. The rest of the paper, with the exception of Section 6,
is devoted to prove the following theorem.
\begin{theorem}
The Hasse diagram in Figure~\ref{fig:expre} correctly shows all the decidable fragments of
\hs\ over the class of strongly discrete linear orders, their relative expressive power,
and the precise complexity class of their satisfiability problem.
\end{theorem}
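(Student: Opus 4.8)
The plan is to split the argument into three parts: fixing the expressiveness lattice, determining the complexity of the decidable fragments, and establishing undecidability of the rest. First I would justify that Figure~\ref{fig:expre} already lists every \hs\ fragment that can possibly be decidable, up to $\equiv$: any fragment that contains, modulo inter-definability, one of $\mathsf D$, $\mathsf{\overline D}$, $\mathsf O$, $\mathsf{\overline O}$, or one of $\mathsf{BE}$, $\mathsf{B\overline E}$, $\mathsf{\overline BE}$, $\mathsf{\overline B\overline E}$, is undecidable by the results recalled in the Introduction, and every \hs\ fragment $\mathcal F$ avoiding all of these satisfies $\mathcal F \preceq \mathsf{A\overline AB\overline B}$ or $\mathcal F \preceq \mathsf{A\overline AE\overline E}$ (here $\later$ and $\Tlater$ are absorbed through $\later p \equiv \meets\meets p$ and $\Tlater p \equiv \Tmeets\Tmeets p$ whenever $\meets$, resp.\ $\Tmeets$, is available). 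The arrows of the diagram (the $\preceq$ relation) are sound by the inter-definability equations of~\cite{DBLP:conf/ijcai/MonicaGMS11}, which hold over all linear orders and hence over strongly discrete ones; the absence of an arrow is witnessed by the undefinability arguments of~\cite{ecai2012} given over finite linear orders, which transfer to the larger class of strongly discrete orders. By the mirroring invariance noted in Section~\ref{sec:pre}, it then suffices to treat one of the two symmetric halves.

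For the $44$ decidable fragments I would isolate the three maximal ones and propagate their upper bounds downward. $\mathsf{AB\overline B\overline L}$ is in EXPSPACE by~\cite{ijfcs2012}; $\mathsf{A\overline A}$ is in NEXPTIME by~\cite{DBLP:conf/tableaux/BresolinMSS11,Goranko03a}; and $\mathsf{B\overline BL\overline L}$ is in NP, which is the new contribution of Section~\ref{sec:NP} extending the NP-membership of $\mathsf{B\overline B}$ from~\cite{Goranko04}. Since $\mathcal F_1 \preceq \mathcal F_2$ yields a polynomial-time reduction of $\mathcal F_1$-satisfiability to $\mathcal F_2$-satisfiability (translate each modality of $\mathcal F_1$ by its inter-definability equation in $\mathcal F_2$), every fragment lying below one of these three inherits the corresponding upper bound, and by inspection of the diagram every decidable fragment does. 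The matching lower bounds come from the minimal fragments of each class: EXPSPACE-hardness of $\mathsf{AB}$ and $\mathsf{A\overline B}$, by adapting the finite-linear-order reduction of~\cite{ecai2012}; NEXPTIME-hardness of $\mathsf A$ and $\mathsf{\overline A}$ from~\cite{bresolin07b}; and, trivially, NP-hardness of $\mathsf B$, $\mathsf{\overline B}$, $\mathsf L$, $\mathsf{\overline L}$, each subsuming propositional satisfiability. Reading off the diagram, every decidable fragment sits between a fragment with a matching hardness bound and a fragment with the matching membership bound, which pins down its complexity class exactly.

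For the undecidable fragments, since undecidability propagates upward along $\preceq$, it is enough to exhibit minimal undecidable members: within the diagram these are $\mathsf{\overline AB}$, $\mathsf{\overline A\overline B}$ and their mirrors $\mathsf{AE}$, $\mathsf{A\overline E}$, whose undecidability is the new result of Section~3, obtained by refining the reductions of~\cite{DBLP:conf/icalp/MontanariPS10} for $\mathsf{A\overline AB}$ and $\mathsf{A\overline A\overline B}$ so as to dispense with the modality $\meets$. A direct combinatorial check then confirms that these four fragments dominate precisely the undecidable nodes of Figure~\ref{fig:expre}, while every remaining node is dominated by one of the three maximal decidable fragments; hence the classification shown is exhaustive and correct.

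I expect the main obstacle to be the new undecidability results of Section~3: pushing undecidability from $\mathsf{A\overline AB}$ down to $\mathsf{\overline AB}$ (and from $\mathsf{A\overline A\overline B}$ to $\mathsf{\overline A\overline B}$) requires re-engineering the encoding of an unbounded computation so that it uses only $\Tmeets$ and $\starts$, resp.\ $\Tstarts$, with no access to the forward-neighbourhood modality $\meets$, while remaining sound over every strongly discrete order. The second substantive step is the NP-membership of $\mathsf{B\overline BL\overline L}$ in Section~\ref{sec:NP}, where one must show that the polynomial-size-model property of $\mathsf{B\overline B}$ survives the addition of the transitive modalities $\later$ and $\Tlater$, presumably by bounding the number of distinct $\later$-successor layers a satisfiable formula can enforce. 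Everything else follows either by citation or by the lattice-propagation arguments above.
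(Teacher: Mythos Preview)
Your proposal is correct and mirrors the paper's own organisation almost exactly: expressiveness via Lemma~\ref{lem:definability-equations} (imported from~\cite{ecai2012}), upper bounds from the three maximal decidable fragments, matching lower bounds from the minimal ones, and undecidability propagated upward from $\mathsf{\overline AB}$, $\mathsf{\overline A\overline B}$, $\mathsf{AE}$, $\mathsf{A\overline E}$. The only point where the paper is more specific than your sketch is the undecidability reduction: rather than merely ``refining'' the lossy-counter-automata construction of~\cite{DBLP:conf/icalp/MontanariPS10}, the paper switches to the nonemptiness problem for \emph{incrementing} counter automata over $\omega$-words and carries out the encoding in $\mathsf{AE}$ (hence, by mirroring, in $\mathsf{\overline AB}$); likewise, the NP argument for $\mathsf{B\overline BL\overline L}$ proceeds via ultimately periodic models with polynomially bounded prefix and period, which is a sharper version of your ``bounding the number of $\later$-successor layers'' intuition.
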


\section{Relative Expressive Power and Undecidability}\label{sec:expr_undec}

The most basic definability results in \hs, e.g., \hs $\equiv$ $\mathsf{A\overline
AB\overline BE\overline E}$, are known since~\cite{HalpernS91}.
In order to show that a given modality is not definable in a specific \hs\
fragment, we make use of the standard notion of bisimulation and the invariance
of modal formulas with respect to bisimulations (see, e.g., \cite{modal-logic}).
In particular, we exploit the fact that, given a modal logic $\mathcal F$,
any $\mathcal F$-bisimulation preserves the truth of all formulas in
$\mathcal F$.
Thus, in order to prove that a modality $ \langle X\rangle$ is not definable in
$\mathcal F$, it suffices to construct a pair of interval models $M$ and $M'$
and an $\mathcal F$-bisimulation between them that relates a pair of intervals
$[x,y] \in M$ and  $[x',y'] \in M'$ such that $M,[x,y] \mmodels  \langle X
\rangle p$ and $M',[x',y'] \not\mmodels  \langle X\rangle p$.

In the following, in order to prove that Figure~\ref{fig:expre} is sound and
complete for the class of all strongly discrete linear orders, we focus our
attention on fragments of $\mathsf{A\overline AB\overline B}$ and of its
mirror image $\mathsf{A\overline AE\overline E}$, and we show that the
set of nodes of the graph in Figure~\ref{fig:expre} is the set of all expressively
different fragments of $\mathsf{A\overline AB\overline B}$ and $\mathsf{A\overline
AE\overline E}$ (including $\mathsf{A\overline AB\overline B}$ and
$\mathsf{A\overline AE\overline E}$ themselves).
Nodes are partitioned with respect to the complexity of their satisfiability problem:
nodes corresponding to undecidable fragments are identified by a red rectangle
and by the superscript $1$, while nodes corresponding to EXPSPACE-complete (resp.,
NEXPTIME-complete, NP-complete) fragments are identified by a yellow rectangle
and the superscript $2$ (resp., blue rectangle/superscript $3$, green
rectangle/superscript $4$).
All \hs fragments that do not appear in the picture are undecidable.
Graph edges represent the relative expressive power of two fragments:
if two nodes, labeled by the fragments $\mathcal F_1$ and $\mathcal F_2$,
are connected by a path going from $\mathcal F_1$ to $\mathcal F_2$, then
$\mathcal F_2 \prec \mathcal F_1$;
%
%
if two fragments $\mathcal F_1$ and $\mathcal F_2$ are not connected
by a path, then $\mathcal F_1 \not\equiv \mathcal F_2$.
%
%
Thus, to show that Figure~\ref{fig:expre} is sound and complete, we need to
prove that \begin{inparaenum}[\it (i)] \item each fragment $\mathcal F_1$
connected to a fragment $\mathcal F_2$ by an arrow is strictly more expressive
than $\mathcal F_2$; \item pairs of fragments in Figure~\ref{fig:expre}, which
are not connected by a path, are expressively incomparable; and
\item the complexity of the satisfiability problem for the considered fragments
is correctly depicted in Figure~\ref{fig:expre}. \end{inparaenum}
Conditions \textit{(i)} and \textit{(ii)} are direct consequences of the
following lemma,  whose proof, given in ~\cite{ecai2012}, makes use of
bisimulations based on finite linearly ordered sets. As the class of all
strongly discrete linearly ordered sets includes that of finite linearly
ordered sets, all results immediately apply.
\begin{lemma}[\cite{ecai2012}]\label{lem:definability-equations}
The only definability equations for the \hs\ fragment
$\mathsf{A\overline AB\overline B}$, over the class
of all strongly discrete linear orders, are $\later p \equiv\meets
\meets p$ and $\Tlater p \equiv\Tmeets \Tmeets p$.
\end{lemma}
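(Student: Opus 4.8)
The plan is to prove the lemma in two parts: the \emph{soundness} of the two displayed equations, and the \emph{completeness} claim that no further \hs\ modality is definable in $\mathsf{A\overline AB\overline B}$.

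Soundness is checked directly against the semantics. For $\later p\equiv\meets\meets p$: if $M,[x,y]\mmodels\meets\meets p$, then there are points $m,n$ with $y<m<n$ and $M,[m,n]\mmodels p$; since $y<m$ we have $[x,y]R_L[m,n]$, so $M,[x,y]\mmodels\later p$. Conversely, if $M,[x,y]\mmodels\later p$, there are $x',y'$ with $y<x'<y'$ and $M,[x',y']\mmodels p$; then $[y,x']$ is a genuine (strict) interval, $[x,y]R_A[y,x']$ and $[y,x']R_A[x',y']$, so $M,[x,y]\mmodels\meets\meets p$. The case of $\Tlater p\equiv\Tmeets\Tmeets p$ is symmetric. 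Only the linear-order axioms are used here, so both equations are in fact valid over every linear order, a fortiori over the strongly discrete ones.

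For completeness we must show that each of the \hs\ modalities not already in $\mathsf{A\overline AB\overline B}$, namely $\finishes$, $\Tfinishes$, $\during$, $\Tduring$, $\overlaps$, and $\Toverlaps$, is \emph{not} definable in $\mathsf{A\overline AB\overline B}$. We use the standard fact that $\mathsf{A\overline AB\overline B}$-formulas are invariant under $\mathsf{A\overline AB\overline B}$-bisimulations: for each such modality $\langle X\rangle$ we build two interval models $M$, $M'$ over suitable \emph{finite} linear orders, an $\mathsf{A\overline AB\overline B}$-bisimulation $Z$ between them, and $Z$-related intervals $[x,y]\in M$, $[x',y']\in M'$ with $M,[x,y]\mmodels\langle X\rangle p$ and $M',[x',y']\not\mmodels\langle X\rangle p$; then no $\psi(p)\in\mathsf{A\overline AB\overline B}$ can satisfy $\mmodels\langle X\rangle p\leftrightarrow\psi(p)$, so $\langle X\rangle$ is undefinable. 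As the models are finite, hence strongly discrete, and undefinability over the subclass of finite linear orders transfers to the whole class of strongly discrete linear orders, the conclusion holds over the intended class. Some economy is possible by grouping the ``inward'' modalities $\finishes,\during,\overlaps$ and the ``outward'' modalities $\Tfinishes,\Tduring,\Toverlaps$ and reusing a single bisimilar pair for several of them at once; note, though, that the mirror operation sends $\mathsf{A\overline AB\overline B}$ to $\mathsf{A\overline AE\overline E}$, not to itself, so the overlined modalities still require their own constructions.

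The only real difficulty is the design and verification of these bisimilar pairs. The obstacle is that compositions of the four available modalities already reach intervals standing to the current one in each of the relations $R_E,R_{\overline E},R_D,R_{\overline D},R_O,R_{\overline O}$ --- for instance $\starts\meets$ reaches $R_E$-, $R_D$-, and $R_O$-related intervals, and $\Tmeets\Tstarts$ reaches $R_{\overline E}$-, $R_{\overline D}$-, and $R_{\overline O}$-related ones --- so $M$ and $M'$ cannot merely differ by the extension of $p$ on the $R_X$-related intervals. Instead one must make $M$ and $M'$ genuinely $\mathsf{A\overline AB\overline B}$-bisimilar, typically by laying the valuation down in a sufficiently long, (eventually) periodic block pattern so that the forth/back clauses for the long-jump modalities $\meets$ and $\Tmeets$ can always be answered, while the two models still disagree at one precisely placed interval reachable only by $\langle X\rangle$. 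Verifying the back-and-forth clauses for $\meets,\Tmeets,\starts,\Tstarts$ simultaneously on such structures is the technical core of the argument (carried out in detail in~\cite{ecai2012}); together with the soundness computation above, it shows that $\later p\equiv\meets\meets p$ and $\Tlater p\equiv\Tmeets\Tmeets p$ are the only definability equations for $\mathsf{A\overline AB\overline B}$ over the strongly discrete linear orders.
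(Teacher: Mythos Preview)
Your overall methodology---bisimulations built over finite linear orders, with the resulting undefinability then transferred to the larger class of strongly discrete linear orders---is exactly what the paper invokes: it does not prove the lemma in place but cites~\cite{ecai2012} and makes precisely this transfer observation. The soundness verification of the two equations is also correct and entirely standard.

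There is, however, a gap in your reading of the completeness claim. You take ``the only definability equations for $\mathsf{A\overline AB\overline B}$'' to mean that $\finishes,\Tfinishes,\during,\Tduring,\overlaps,\Toverlaps$ are undefinable in the \emph{full} fragment $\mathsf{A\overline AB\overline B}$, and you organise your bisimulation constructions around those six targets. But the paper applies the lemma to obtain the entire Hasse diagram of Figure~\ref{fig:expre} (conditions \textit{(i)} and \textit{(ii)} of Section~\ref{sec:expr_undec}), and for that one needs the stronger assertion: no \hs\ modality is definable in any sub-fragment of $\mathsf{A\overline AB\overline B}$ that does not already contain it, the two displayed equations being the sole exceptions. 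Concretely, one must also show that $\meets$ is not definable in $\mathsf{\overline AB\overline BL}$, that $\Tmeets$ is not definable in $\mathsf{AB\overline B\overline L}$, that $\starts$ is not definable in $\mathsf{A\overline A\overline B}$, that $\Tstarts$ is not definable in $\mathsf{A\overline AB}$, and that $\later$ (resp.\ $\Tlater$) is not definable in $\mathsf{B\overline B\overline L}$ (resp.\ $\mathsf{B\overline BL}$). Your six constructions, even if fully verified, do not touch these internal separations; a further family of bisimilar pairs---one for each of $\meets,\Tmeets,\starts,\Tstarts,\later,\Tlater$ against the appropriate maximal sub-fragment lacking it---is required, and this is in fact the bulk of what~\cite{ecai2012} supplies.
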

Hence, we can restrict our attention to condition \textit{(iii)}. The rest of
the section is devoted to prove the undecidability of all fragments marked as
undecidable in Figure~\ref{fig:expre}. All fragments which are not referred
to in the figure have already been proved undecidable over the class of strongly
discrete linearly ordered
sets~\cite{DBLP:conf/time/BresolinMGMS11,DBLP:conf/lics/MarcinkowskiM11}.
All decidable fragments of \hs\ over the class of strongly discrete linear
orders are thus depicted in Figure~\ref{fig:expre}. Section \ref{sec:NP}
and \ref{sec:NEXPEXSP} will be devoted to the identification of the exact
complexity of these decidable fragments.

The undecidability result we give here resembles those
in~\cite{ecai2012,DBLP:conf/icalp/MontanariPS10}. Nevertheless,
the required modifications are far from being trivial.
From~\cite{DBLP:conf/icalp/MontanariPS10,sala}, we know that there exists
a reduction from the structural termination problem for lossy counter
automata, which is known to be undecidable~\cite{lossy}, to the satisfiability
problem for $\mathsf{A\overline AB}$ and $\mathsf{A \overline A\overline B}$.
Here, we consider the nonemptiness problem for incrementing counter automata
over infinite words, which is known to be undecidable~\cite{Demri:2006:LFQ:1157735.1158038},
and we show that it can be reduced to the satisfiability problem for the fragments
$\mathsf{\overline AB}$, $\mathsf{\overline A\overline B}$, $\mathsf{A E}$,
and $\mathsf{A\overline E}$. For the sake of brevity, we will work out all
the details of the reduction for $\mathsf{AE}$ only. Since
$\mathsf{AE}$ and $\mathsf{\overline AB}$ are completely symmetric with
respect to the class of strongly discrete linearly ordered sets, the
reduction for $\mathsf{AE}$ basically works for $\mathsf{\overline AB}$
as well. Moreover, adapting it to $\mathsf{A\overline E}$ (and therefore,
by symmetry, to $\mathsf{\overline A\overline B}$) is straightforward.
Incrementing counter automata can be viewed as a variant of lossy counter
automata where faulty transitions increase the values of counters instead
of decrementing them. Hence, some of the basic concepts
of the reduction given in \cite{DBLP:conf/icalp/MontanariPS10,sala}
can be exploited.
A comprehensive survey on faulty machines and on the relevant complexity, decidability,
and undecidability results can be found in~\cite{DBLP:conf/stacs/BouyerMOSW08}.
Formally, an {\em incrementing counter automaton} is a tuple $\mathcal A
= (\Sigma,Q,q_0,C,\Delta,F)$, where $\Sigma$ is a finite alphabet, $Q$ is a
finite set of control {\em states}, $q_0\in Q$ is the initial state, $C =
\{c_1, \ldots, c_k\}$ is the set of {\em counters}, whose values range over
$\mathbb N$, $\Delta$ is a {\em transition relation}, and $F\subseteq Q$
is the set of final states.
Let us denote by $\epsilon$ the {\em empty word} (we assume $\epsilon
\not\in \Sigma$). The relation $\Delta$ is a subset of $Q\times
(\Sigma\cup\{\epsilon\}) \times L\times Q$, where $L$ is the {\em
instruction set} $L = \{inc,dec,ifz\}\times\{1,\ldots,k\}$.
A {\em configuration} of $\mathcal A$ is a pair $(q,\bar v)$, where
$q\in Q$ and $\bar v$ is the vector of counter values.
A {\em run} of an incrementing counter automaton is an infinite
sequence of configurations such that, for every pair of
consecutive configurations $(q,\bar v), (q',\bar v')$ an
\emph{incrementing transition} $(q,\bar v) \xrightarrow{l,a}_\dagger
(q',\bar v')$ has been taken.
We say that $(q,\bar v) \xrightarrow{l,a}_\dagger (q',\bar v')$
has been taken if there exist $\bar{v}_\dagger, \bar{v}'_\dagger$
such that $\bar v \leq \bar{v}_\dagger$, $(q,\bar
v_\dagger)\xrightarrow{l,a} (q',\bar v'_\dagger)$, and
$\bar{v}'_\dagger \leq \bar{v}'$, where $(q,\bar v)
\xrightarrow{l,a} (q',\bar v')$
iff $(q, a, l, q') \in \Delta$ and if $l=(inc,i)$
(resp., $(dec,i)$, $(ifz,i)$), then $v'_i = v_i +1$
(resp., $v'_i = v_i - 1$, $v'_i = v_i = 0$) (the ordering
$\bar v \leq \bar v'$ is defined component-wise in
the obvious way).
Notice that once an incrementing transition $(q,\bar v)
\xrightarrow{l,a}_\dagger (q',\bar v')$ has been taken,
counter values may have been increased nondeterministically
before or after the execution of the basic transition
$(q,\bar v) \xrightarrow{l,a} (q',\bar v')$ by an
arbitrary natural number.
We say that an infinite run of $\mathcal A$ over an $\omega$-word
$w\in\Sigma^{\omega}$ is {\em accepting} iff it traverses
a state in $F$ infinitely often.
The nonemptiness problem for increasing counter automata is the
problem of deciding whether there exists at least one $\omega$-word
accepted by it.
In Section 6, we will show that when we restrict our attention
to $\mathbb N$-models, the situation becomes slightly different,
as symmetry does not hold anymore.

\begin{lemma}
There exists a reduction from the nonemptiness problem for
incrementing counter automata over $\omega$-words to the
satisfiability problem for $\mathsf{AE}$ over the class of
strongly discrete linear orders.
\end{lemma}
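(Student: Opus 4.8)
The plan is to encode an accepting $\omega$-run of an incrementing counter automaton $\mathcal A = (\Sigma,Q,q_0,C,\Delta,F)$ as a model of an $\mathsf{AE}$-formula, exploiting the fact that over a strongly discrete linear order the modalities $\meets$ and $\finishes$ together give access to a ``point-like'' structure: unit intervals (those with no point strictly inside) will represent the cells of an infinite grid whose rows are consecutive configurations of the run and whose columns are indexed by counter positions. First I would fix a special proposition letter, say \emph{u}, characterizing unit intervals, and use $\meets$ to walk rightward along a ``spine'' of consecutive unit intervals; since the order is strongly discrete, $\meets\meets$ reaches exactly the intervals that start where the current one ends after one extra point, so one can force an infinite discrete chain. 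On top of this chain I would lay out, using further proposition letters, blocks of consecutive cells, each block being one configuration: a state label from $Q$, and for each counter $c_i$ a unary encoding of its value by a distinguished letter repeated $v_i$ times within the block. The key role of $\finishes$ is to ``look back'' inside the current block (or across to the previous block) so that counter values in consecutive configurations can be compared cell-by-cell, which is what is needed to check that a transition was taken: the basic $inc/dec/ifz$ effect on the active counter, and equality (up to the incrementing fault, i.e. the new value is $\geq$ the old) on the others.

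The central technical step is the \emph{faithful comparison of counter values across adjacent configurations}. This is where I expect the main obstacle to lie, and it is exactly the point where incrementing automata are a better fit than lossy ones: with only $\meets$ and $\finishes$ available (no $\starts$, no inverses) one cannot in general enforce a perfect bijection between the cells of one block and the cells of the next, but one \emph{can} enforce a one-directional matching that guarantees the next value is at least the current value minus the prescribed change — and the nondeterministic ``increment'' faults of the automaton absorb exactly the mismatches one is unable to rule out. Concretely I would introduce ``corridor'' propositions connecting a marked cell in block $n$ to a marked cell in block $n+1$, forced by $\finishes$/$\meets$ axioms to be monotone and to cover all the value-carrying cells of block $n$; the absence of the symmetric modality means the covering may be non-injective in the wrong direction, but that only lets counters grow, which is permitted. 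I would then carefully design the axioms so that the active counter is updated exactly (this needs a little more work: one extra marked cell, or one missing cell, pinned down precisely by $\finishes$) while the passive counters are only required to be non-decreasing.

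Having set up the grid and the comparison gadget, the remaining steps are routine bookkeeping. I would write a universally-quantified conjunction of $\mathsf{AE}$-formulas (using $\Gmeets$ and $\Gfinishes$ to propagate constraints along the whole chain) expressing: (a) the shape axioms (there is an initial unit interval; every unit interval is followed, via $\meets$, by another; the chain carries a well-formed sequence of blocks); (b) the initialization axiom ($q_0$ in the first block, all counters zero, i.e. no value-letters in the first block); (c) the transition axioms (between block $n$ and $n+1$ some $(q,a,l,q')\in\Delta$ is witnessed, with the counter effects enforced by the comparison gadget as above); and (d) the acceptance axiom (some final state from $F$ appears infinitely often — expressible as $\Gmeets(\bigvee_{q\in F} q\text{-block appears later})$, using that $\meets$ reaches arbitrarily far along the discrete chain). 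Soundness is immediate: any accepting run yields such a model. For completeness I would argue that any model of the formula induces, by reading off the block labels and value-letter counts along the chain, a sequence of configurations in which each step is an incrementing transition — here the one-directional corridor axioms give exactly $\bar v \le \bar v_\dagger$, $(q,\bar v_\dagger)\xrightarrow{l,a}(q',\bar v'_\dagger)$, $\bar v'_\dagger \le \bar v'$ — and the $F$-infinitely-often axiom forces it to be accepting. Finally I would remark, as the excerpt already announces, that swapping the roles of $\meets/\finishes$ with $\Tmeets/\Tstarts$ turns the construction into one for $\mathsf{\overline AB}$, and replacing $\finishes$ by $\Tfinishes$ (looking back to the \emph{start} of a block instead of forward to its end) adapts it to $\mathsf{A\overline E}$, hence by mirroring to $\mathsf{\overline A\overline B}$.
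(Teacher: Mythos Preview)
Your proposal is correct and follows essentially the same approach as the paper: encode configurations as blocks of unit intervals (characterized there by $\Gfinishes\bot$), compare counter values across consecutive blocks via ``correspondence'' intervals whose one-sided injectivity (the paper forbids one $corr$ ending another) yields exactly the $\geq$-constraint that incrementing faults absorb, and express B\"uchi acceptance with $\Gmeets\meets\meets\bigvee_{q_f\in F}q_f$. The one mechanism you leave implicit is the paper's suffix-labeling trick (propositions $conf'$, $conf_q$, $conf_a$, $conf_{c_i}$ on suffixes of a configuration, together with a terminal $\blank$ cell) which is how $\finishes$ is concretely made to ``look inside'' a block despite $\mathsf{AE}$ having no direct subinterval access; your outline would need exactly such a device when the axioms are written out.
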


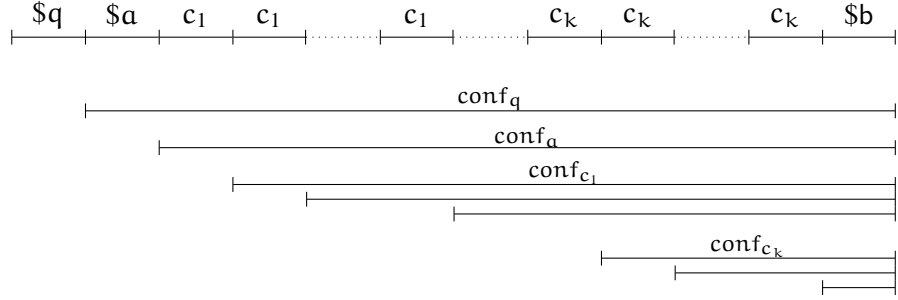
\begin{figure}
\centering
\begin{tikzpicture}[scale=.98]
\draw[|-] (0,0) -- node[above]{$\$q$} ++(1,0)node(end_q){};
\draw[|-] (end_q)++(0,0) -- node[above]{$\$a$} ++(1,0)node(end_a){};
\draw[|-] (end_a)++(0,0) -- node[above]{$c_1$} ++(1,0)node(end_c1_1){};
\draw[|-|] (end_c1_1)++(0,0) -- node[above]{$c_1$} ++(1,0)node(end_c1_2){};
\draw[dotted] (end_c1_2)++(0,0) -- ++(1,0)node(end_c1_3){};
\draw[|-|] (end_c1_3)++(0,0) -- node[above]{$c_1$} ++(1,0)node(end_c1){};
\draw[dotted] (end_c1)++(0,0) -- ++(1,0)node(start_ck){};
\draw[|-] (start_ck)++(0,0) -- node[above]{$c_k$} ++(1,0)node(end_ck_1){};
\draw[|-|] (end_ck_1)++(0,0) -- node[above]{$c_k$} ++(1,0)node(end_ck_2){};
\draw[dotted] (end_ck_2)++(0,0) -- ++(1,0)node(end_ck_3){};
\draw[|-] (end_ck_3)++(0,0) -- node[above]{$c_k$} ++(1,0)node(end_ck){};
\draw[|-|] (end_ck)++(0,0) -- node[above]{$\blank$} ++(1,0)node(end){};
{\footnotesize
\draw[|-|] (end_q)++(0,-1) --node[above=-1mm]{$conf_q$} ++(11,0);
\draw[|-|] (end_a)++(0,-1.5) --node[above=-1mm]{$conf_a$} ++(10,0);
\draw[|-|] (end_c1_1)++(0,-2) --node[above=-1mm]{$conf_{c_1}$} ++(9,0);
\draw[|-|] (end_c1_2)++(0,-2.2) -- ++(8,0);
\draw[|-|] (end_c1)++(0,-2.4) -- ++(6,0);
\draw[|-|] (end_ck_1)++(0,-3) --node[above=-1mm]{$conf_{c_k}$} ++(4,0);
\draw[|-|] (end_ck_2)++(0,-3.2) -- ++(3,0);
\draw[|-|] (end_ck)++(0,-3.4) -- ++(1,0);
}
\end{tikzpicture}
\caption{Encoding of a configuration of an incrementing counter automaton in $\mathsf{AE}$.}
\label{fig:encoding}
\end{figure}

\begin{proof}
Let $\mathcal A = (\Sigma,Q,q_0,C,\Delta,F)$ be an incrementing counter automaton.
We write an $\mathsf{AE}$ formula $\varphi_{\mathcal A}$ which is satisfiable
over the class of strongly discrete linear orders iff there
is at least one $\omega$-word over $\Sigma$ accepted by
$\mathcal A$. Let us assume that $|Q|=\mu+1$, $|\Sigma|=\nu$, $|F|=\eta$,
and $|C|=k$, and there are
\begin{inparaenum}[\it (i)]
\item  $\mu+1$ proposition letters $q_0,q_1,\ldots,q_\mu$, one for
each state in $Q$ ($q_0$ being the initial state);
\item $\nu$ proposition letters $a_1,\ldots,a_\nu$, one for each
symbol in $\Sigma$; and
\item $k$ proposition letters $c_1,\ldots,c_k$, one for each counter
in $C$.
\end{inparaenum}
Moreover, to simplify the formula, we introduce a proposition letter
$\$q$ (resp., $\$a$, $\$c$) which holds at some interval iff at least one
$q_i$ (resp., $a_i$, $c_i$) holds at that interval.
Finally, a proposition letter $conf$ is used to denote a configuration.
Additional auxiliary proposition letters will be introduced later on.

To encode the components of a configuration, we use intervals of the
form $[x,x+1]$ (unit intervals), which are univocally identified
by the $\mathsf{AE}$ formula $\Gfinishes \bot$. A configuration is
modeled by a (non-unit) interval $[x,x+s]$, labeled with $conf$, consisting
of a sequence of unit intervals labeled as follows: $[x,x+1]$ is labeled
with (a proposition letter for) a state in $Q$, $[x+1,x+2]$
by a letter in $\Sigma$, and all the remaining unit intervals, but the
last one (for technical reasons, $[x+s-1,x+s]$ is labeled with a special
proposition letter~\blank), are labeled with counters in $C$.
Figure~\ref{fig:encoding} depicts (part of) the encoding of a configuration.
We constrain any configuration interval $[x,x+s]$ to contain one
unit interval labeled with a state, one labeled with an alphabet letter,
and, for $1 \leq i \leq k$, as many unit intervals labeled with $c_i$
as the value of counter $c_i$ is in that configuration.
Without loss of generality, we can assume all counter values to be initialized
to $0$ ($\bar v=\bar 0$), and thus the initial configuration contains no counter
proposition letters.

Let $[U]\varphi$ be a shorthand for the formula $[U]\varphi = \varphi\wedge
\Gmeets\varphi\wedge\Gmeets\Gmeets\varphi$ ({\em universal} modality).
We first constrain proposition letters that denote states (in $Q$), input
symbols (in $\Sigma$), and counter values to be correctly placed.
\setlength{\partextlength}{5cm}
\begin{align}
&[U](\$q \leftrightarrow \bigvee_{i=0}^{\mu} q_i \wedge
 \$a \leftrightarrow \bigvee_{i=1}^{\nu} a_i \wedge
 \$c \leftrightarrow \bigvee_{i=1}^{k} c_i) \label{form:generic}
 && \partext{placeholders are correctly set}\\
&[U](\Gfinishes \bot \leftrightarrow \$q \vee \$a \vee \$c \vee \blank)\label{form:unit_int}
 && \partext{placeholders are unit intervals} \\
&[U]\bigwedge_{p\in\{q,a,c,b\}}(\$p\rightarrow\neg\bigvee_{p'\in\{q,a,c,b\}, p'\neq p}\$p')
 && \partext{exactly one placeholder per unit interval}\label{form:unique} \\
 &[U](\bigwedge_{i\neq j} (q_i\rightarrow\neg q_j) \wedge
  \bigwedge_{i\neq j} (a_i\rightarrow\neg a_j) \wedge
 \bigwedge_{i\neq j} (c_i\rightarrow\neg c_j)) \label{form:notwo}
 && \partext{exactly one state, letter, counter}
\end{align}
Next, we encode the sequence of configurations as a (unique)
infinite chain that starts at the ending point of the interval
where $\varphi_{\mathcal A}$ is evaluated, and we constrain the
counter values of the initial configuration to be equal to $0$.
To force such a chain to be unique and to prevent configurations
from containing or overlapping other configurations, we introduce
an additional proposition letter $conf'$, which holds over all and only
those intervals which are suffixes of a $conf$-interval.
\setlength{\partextlength}{4.5cm}
\begin{align}
&\meets(conf\wedge \finishes \finishes \top \wedge \Gfinishes\Gfinishes\Gfinishes\bot)\label{form:conf1}
 &&\partext{the initial configuration has two internal points only}\\
&[U](conf\rightarrow\meets conf \wedge \finishes \finishes \top)
 \label{form:conf2}
 &&\partext{a chain of $conf$s; each $conf$ has room for state and letter} \\
&[U]((conf\rightarrow\Gfinishes conf')\wedge(conf'\rightarrow\neg conf))\label{form:conf3}
 &&\partext{$conf$s are ended by $conf'$s which are not $conf$}\\
&[U]\bigl((\meets conf'\rightarrow\neg conf)\wedge
    (conf' \rightarrow \meets conf \wedge \neg \finishes conf)\bigr)
   \label{form:conf4}
 &&\partext{$conf$ neither overlap nor contain other
 $conf$s; $conf'$s end $conf$s}
\end{align}
Now, we force configurations to be properly structured: they must start
with a unit interval labeled with a state (the initial configuration
with $q_0$), followed by a unit interval labeled with an input letter,
possibly followed by a number of unit intervals labeled with counters,
followed by a last unit interval labeled with \blank.
As modalities $\meets$ and $\finishes$ do not allow one,
in general, to refer to the subintervals of a given interval,
a little technical detour is necessary. We introduce the
auxiliary proposition letters $conf_q$, $conf_a$, and $conf_{c_i}$
(one for each type of counter), and we label the suffix of a
configuration interval met by a unit interval labeled with $\$q$
(resp., $\$a$, $c_i$) with $conf_q$ (resp., $conf_a$, $conf_{c_i}$).
In such a way, modality $\finishes$ can be exploited to get an
indirect access to the components of a configuration.
As an example, we use it to force every configuration to
include at most one state and one input letter.
Notice that proposition letter \blank plays an essential role here:
it allows us to associate the last $c_i$ of each configuration
with the corresponding $conf_{c_i}$.
\setlength{\partextlength}{3.75cm}
\begin{align}
& \meets q_0\wedge[U] (\meets conf \leftrightarrow \meets \$q) \label{form:confelem1}
 &&\partext{$conf$ starts with state (the initial one with $q_0$)}\\
& [U]((\$q \rightarrow \meets \$a) \wedge (\$a \vee \$c \rightarrow \meets (\$c \vee \blank))
 \wedge (\blank \rightarrow \meets \$q) )
 &&\partext{$conf$ is properly structured}\\
&[U]((\$q \rightarrow \Gmeets(conf'\rightarrow conf_q))\wedge (\$a
\rightarrow \Gmeets(conf'\rightarrow conf_a)))\label{form:confelem2}
 &&\partext{$\$q$ meets $conf_q$, $\$a$ meets $conf_a$}\\
&[U](\neg (conf_q\wedge\finishes conf_q)\wedge\neg(conf_a\wedge\finishes conf_a))\label{form:confelem3}
 &&\partext{at most one state and one letter per $conf$}\\
&[U](\bigwedge_{i=1}^{k}(c_i\rightarrow\Gmeets(conf'\rightarrow conf_{c_i})))
 &&\partext{$c_i$ meets $conf_{c_i}$}\label{form:confelem4}
\end{align}
To model decrements and increments, auxiliary proposition letters
$c_{dec}, c_{new}, conf_{dec},$ and $conf_{new}$ are introduced.
$c_{dec}$, which labels at most one unit interval $c_i$ of a given
configuration, constrains the value of the $i$-th counter to be
decremented by $1$ by the next transition, provided that $\Delta$
contains such a transition. Similarly, we constrain $c_{new}$ to
label a (unique) unit interval $c_i$ added by the last transition
to represent an increment by $1$ of the value of the $i$-th
counter, provided that $\Delta$ contains such a transition.
\setlength{\partextlength}{3.5cm}
\begin{align}
&[U]\bigl(\bigwedge_{l\in\{new,dec\}}(c_l
\rightarrow(\$c\wedge\Gmeets(conf'\rightarrow conf_l)))\bigr)
 &&\partext{if $c_{l}$, then $conf_{l}$}\\
&[U]\bigl(\bigwedge_{l\in\{new,dec\}}((\Gfinishes \bot
\wedge\meets conf_{l})\rightarrow c_{l})\bigr)
 &&\partext{if $conf_{l}$, then $c_{l}$}\\
&[U](\neg (conf_{dec}\wedge\finishes conf_{dec}) \wedge
\neg(conf_{new}\wedge\finishes conf_{new}))
\label{form:newincdec4}
 && \partext{at most one $conf_{l}$ per $conf$}
\end{align}
To constrain the values that counters may assume in consecutive
configurations, we introduce three auxiliary proposition letters
$corr$, $corr'$, and $corr_{conf}$. To model the faulty behavior of
$\mathcal A$, that can increment, but not decrement, the values
of counters non-deterministically, we allow two $corr$-intervals
to start, but not to end, at the same point.
\setlength{\partextlength}{6.5cm}
\begin{align}
&\Gmeets(\meets c_{new} \rightarrow\neg\finishes corr)\label{form:corr1}
 &&\partext{$new$ counters have not a counterpart in previous $conf$}\\
&[U]((\$q\vee\$a\vee c_{dec})\rightarrow\Gmeets\neg corr)\label{form:corr3}
 &&\partext{$q$s, $a$s, and $dec$ counters have not a counterpart in next $conf$}\\
&[U]((\$c\wedge\neg c_{dec})\rightarrow\meets corr)\label{form:corr2}
 &&\partext{non $dec$ counters have a counterpart in next $conf$} \\
&[U]( (\Gfinishes \bot \wedge \meets corr) \rightarrow \$c )\label{form:corr4}
 &&\partext{$corr$ are met by a counter}\\
\begin{split}&[U]((corr \rightarrow \Gfinishes corr' \wedge \meets \$c)\wedge \\
&\quad \wedge (\meets conf\rightarrow\Gmeets(corr'\rightarrow corr_{conf})))\end{split}\label{form:corr5}
&&\partext{$corr$s are ended by $corr'$s and meet a counter, some $corr'$s are $corr_{conf}$s}\\
\begin{split}&[U](\neg (corr_{conf}\wedge\finishes corr_{conf}) \wedge\\
&\quad\wedge(corr\rightarrow\finishes corr_{conf}))\end{split}\label{form:corr7}
&&\partext{$corr$ connects counters of consecutive $conf$}\\
&[U](\meets corr_{conf} \rightarrow \meets conf)\label{form:corr6}
 &&\partext{$corr_{conf}$ begins $conf$}\\
&[U](\bigwedge_{i=1}^{k}(c_i\rightarrow\Gmeets(corr\rightarrow \meets c_i)))\label{form:corr8}
 &&\partext{each $corr$ corresponds to some counter} \\
&[U] \neg (corr \wedge \finishes corr)\label{form:corr9}
 &&\partext{no $corr$ ends $corr$}
\end{align}
Finally, we constrain consecutive configurations to be related by
some transition $(q,a,l,q')$ in $\Delta$.
%
%
\setlength{\partextlength}{3cm}
\begin{align}
\begin{split}\bigvee_{(q,a,(inc,i),q')\in\Delta}\hspace{-0.25cm}&
  \bigl(\meets(q\wedge\meets a)\wedge\meets(conf\wedge\meets q' \wedge \\[-\baselineskip]
  & \qquad\meets(conf\wedge\finishes (conf_{c_i}\wedge conf_{new})))\bigr)
\end{split}\label{form:deltainc}
&&\partext{instruction $(inc,i)$}
\\
\begin{split}\bigvee_{(q,a,(dec,i),q')\in\Delta}\hspace{-0.25cm}&
 \bigl(\meets(q\wedge\meets a)\wedge\meets(conf\wedge\meets q' \wedge\\[-\baselineskip]
 & \hspace{3cm}\finishes (conf_{c_i}\wedge conf_{dec}))\bigr)
\end{split}\label{form:deltadec}
&&\partext{instruction $(dec,i)$}
\\
\bigvee_{(q,a,(ifz,i),q')\in\Delta}\hspace{-0.25cm}&
 \bigl(\meets(q\wedge\meets a)\wedge\meets(conf\wedge\meets q'\wedge\Gfinishes \neg conf_{c_i})\bigr)
\label{form:deltaifz}
&&\partext{instruction $(ifz,i)$}
\\
[U] \bigl(\meets &conf \rightarrow \bigl((\ref{form:deltainc})\vee(\ref{form:deltadec})\vee(\ref{form:deltaifz})\bigr)\bigr)\label{form:delta3}
&&\partext{an instruction}
\end{align}
We define $\varphi_{\mathcal A}$ as the conjunction of all above formulas
paired with the condition that the infinite computation passes through a final
state infinitely often.
$$
\varphi_{\mathcal A}=\eqref{form:generic}\wedge\ldots\wedge\eqref{form:corr9}\wedge\eqref{form:delta3}\wedge \Gmeets \meets \meets \bigvee_{q_f\in F}q_f
$$
It is straightforward to prove that $\varphi_{\mathcal A}$  is satisfiable iff $\mathcal A$ accepts at least one $\omega$-word.
\end{proof}

\section{NP-Completeness}
\label{sec:NP}

In this section, we prove that NP-completeness of $\mathsf{B
\overline B}$~\cite{Goranko04} can be extended to $\mathsf{B
\overline BL\overline L}$.
Since the satisfiability problem for propositional logic is
NP-complete, every proper fragment of $\mathsf{B\overline BL\overline L}$
including it is at least NP-hard. Unlike the rest of the sections,
the core of this one is a membership proof, namely,
a proof of NP-membership: by a model-theoretic argument,
it shows that satisfiability of a $\mathsf{B\overline BL
\overline L}$-formula $\varphi$ can be reduced to its satisfiability
in a periodic model where the lengths of prefixes and
periods have a bound which is polynomial in $|\varphi|$.

For the sake of simplicity, we consider the case of
$\mathsf{B\overline BL\overline L}$ interpreted over
$\mathbb{N}$. The proof can be generalized to the whole
class of strongly discrete linear orders.
Moreover, it can be shown that satisfiability of a
$\mathsf{B\overline BL\overline L}$-formula $\varphi$ over $\mathbb{N}$
can be reduced to satisfiability of the formula 
$\tau(\varphi)= \later \Tlater \varphi$ over the interval $[0,1]$, that
is, $M,[x,y]\mmodels \varphi$ for some $[x,y]$ if and only $M,[0,1]
\mmodels\tau(\varphi)$.
Thus, we can safely restrict our attention to the problem of
satisfiability over $[0,1]$ ({\em initial} satisfiability).
As a preliminary step, we introduce some useful notation and notions,
including that of periodic model.

\begin{definition}\label{def:periodic-model}
An interval model $M = \langle\mathbb I(\mathbb N), V\rangle$ is
{\em ultimately periodic, with prefix $Pre$ and period $Per$}, if
for every interval $[x,y] \in \mathbb I(\mathbb N)$ and proposition
letter $p \in \mathcal{AP}$,
\begin{inparaenum}[\it (i)]
\item if $x \geq Pre$, then $[x,y] \in V(p)$ iff $[{x+Per}, {y+Per}] \in V(p)$ and
\item if $y \geq {Pre}$, then $[x,y] \in V(p)$ iff $[x,y+Per] \in V(p)$.
\end{inparaenum}
\end{definition}

Let us consider a $\mathsf{B\overline BL\overline L}$-formula $\varphi$.
We define $Cl(\varphi)$ as the set of all subformulas of $\varphi$ and
of their negations.
Let $M$ be a model such that $M,[0,1]\mmodels\varphi$. For every point
$x$ of the model, let $\mathcal R_L(x)$ (resp., $\mathcal R_{\overline
L}(x)$) be the maximal subset of $Cl(\varphi)$ consisting of all and
only those $\later$-formulas (resp., $\Tlater$-formulas) and their
negations that are satisfied over intervals ending (resp., beginning)
at $x$\footnote{It is easy to see that all intervals ending (resp., beginning)
at the same point satisfy the same $\later$-formulas (resp.,
$\Tlater$-formulas).}.
Let $\mathcal R(x)=\mathcal R_L(x)\cup\mathcal R_{\overline L}(x)$.
$\mathcal R(x)$ must be consistent, that is, it cannot contain a
formula and its negation. Let $\mathcal R$ be the subset of
$Cl(\varphi)$ that contains all possible $\later$- and $\Tlater$-formulas.
It is immediate to see that $|\mathcal R| \leq 2|\varphi|$.
In the following, we will also compare intervals with respect to
satisfiability of $\starts$- and $\Tstarts$-formulas.
Given a model $M$, we say that two intervals $[x,y]$ and $[x',y']$
are $B$-equivalent (denoted $[x,y] \equiv_B [x',y']$) if for every
$\starts\psi \in Cl(\varphi)$, $M,[x,y]\mmodels\starts\psi$ iff
$M,[x',y']\mmodels\starts\psi$ and for every $\Tstarts\psi \in
Cl(\varphi)$, $M,[x,y]\mmodels\Tstarts\psi$ iff $M,[x',y']\mmodels
\Tstarts\psi$.
We denote by $m_B$ the number of $\starts$- and $\Tstarts$-formulas
in $Cl(\varphi)$.
To prove that the satisfiability problem for $\mathsf{B\overline
BL\overline L}$ is in NP, we first prove that every satisfiable
formula $\varphi$ has an ultimately periodic model, and then we
show how to possibly contract such a model to obtain a model whose
prefix and period are polynomial in $|\varphi|$.
\begin{lemma}\label{lem:bbll-ultimately-periodic}
Let $\varphi$ be a $\mathsf{B\overline BL\overline L}$-formula
and $M = \langle\mathbb{I}(\mathbb{N}), V\rangle$ be such that
$M, [0,1] \mmodels \varphi$. Then, there exists an
ultimately periodic model $M^{*} = \langle \mathbb{I}(\mathbb{N}),
V^{*}\rangle$ that satisfies $\varphi$.
\end{lemma}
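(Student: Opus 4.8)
The plan is to extract from the given model $M$ a finite amount of "essential" information at each point of $\mathbb{N}$ and use a pigeonhole/Ramsey-type argument to find two points that look alike far enough along, then fold the model at those points into an ultimately periodic one. First I would attach to every point $x\in\mathbb{N}$ a finite colour recording everything a $\mathsf{B\overline B L\overline L}$-formula can detect about intervals touching $x$: namely the pair $(\mathcal R_L(x),\mathcal R_{\overline L}(x))$ of satisfied $\later$- and $\Tlater$-formulas (and negations), together with the $B$-equivalence type of the relevant intervals ending/beginning at $x$. Since all intervals ending at $x$ agree on $\later$-formulas, and likewise for beginnings and $\Tlater$-formulas (the footnote in the excerpt), and since there are only finitely many $\starts$/$\Tstarts$ formulas in $Cl(\varphi)$, there are only finitely many such colours. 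By the infinite pigeonhole principle there exist two points $Pre<Pre'$ with the same colour; set $Per=Pre'-Pre$.

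Next I would define $V^{*}$ by keeping $V$ on the prefix $\{[x,y]: x<Pre'\}$ (or a slightly larger initial segment, see below) and then copying periodically: for $x\geq Pre$, declare $[x,y]\in V^{*}(p)$ iff $[x-Per\cdot\lfloor\frac{x-Pre}{Per}\rfloor,\,\cdot\,]$ ... — more cleanly, define $[x,y]\in V^{*}(p)$ iff $[x',y']\in V(p)$ where $x',y'$ are obtained from $x,y$ by repeatedly subtracting $Per$ from whichever endpoint exceeds $Pre$ until it lands in $[Pre,Pre')$ (endpoints below $Pre$ are left untouched). This is well defined because $L$-type and $\overline L$-type and $B$-type are all invariant under the shift by $Per$ at the two matched points, so the "seam" is consistent; conditions (i) and (ii) of Definition~\ref{def:periodic-model} then hold by construction. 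The content of the lemma is that $M^{*},[0,1]\mmodels\varphi$, which I would prove by showing $M$ and $M^{*}$ satisfy the same $Cl(\varphi)$-formulas over every interval, by induction on formula structure. The boolean cases are immediate; for $\starts$ and $\Tstarts$ I would appeal to the fact that the folding map sends an interval to one with the same $B$-type and moreover the subinterval/superinterval structure of $\starts$ is preserved under the folding when the endpoint doing the work stays on one side of the seam; for $\later$ and $\Tlater$ I would use that the folding preserves $\mathcal R_L$ and $\mathcal R_{\overline L}$ at every point and that a witness for $\later\psi$ beyond $Pre'$ can be pulled back into the period.

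The main obstacle I anticipate is exactly the seam: making sure that a $\later$- or $\starts$-witness in $M$ that "straddles" the period boundary (e.g.\ an interval $[x,y]$ with $x<Pre$ but $y$ very large, or a $\starts$-witness $[x,y']$ with $y'$ crossing from below $Pre$ to above) is correctly accounted for in $M^{*}$. This is why one typically does not cut at the first repeated colour but absorbs a bit of prefix: one should choose $Pre$ large enough (and use that beyond some point the $L$-type stabilizes since $\mathcal R_L(x)$ can only shrink as $x$ grows — $\later$-formulas true at larger ending points are true at smaller ones — so it is eventually constant) so that the prefix already "sees" all the $\later$-witnesses it will ever need, and symmetrically handle $\Tlater$ which looks forward. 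A clean way to organize this is: first argue $\mathcal R_L$ is eventually constant and pick $Pre$ past that stabilization point and past the point where all finitely many "one-time" existential obligations of $\varphi$ over the initial interval are met; then among points $\geq Pre$ find two with equal full colour and fold there. I would also need the easy observation that $B$-equivalence classes of the intervals relevant at a given point form a finite set, so the colouring is genuinely finite.

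Once Lemma~\ref{lem:bbll-ultimately-periodic} is in hand, the proof of NP-membership continues (in the next lemma, not this one) by contracting the prefix and the period using the same $\mathcal R$- and $B$-type bookkeeping to bring $Pre$ and $Per$ down to a polynomial in $|\varphi|$; here we only owe the existence of \emph{some} ultimately periodic model, so no size bound is required at this stage, which keeps the pigeonhole argument above completely unquantified and elementary.
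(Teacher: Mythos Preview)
Your high-level plan---find a point beyond which the model ``looks the same'' and fold periodically---is the right instinct, but the mechanism you propose does not carry the weight you place on it. The gap is in the point colouring. Recording $(\mathcal R_L(x),\mathcal R_{\overline L}(x))$ is fine, but ``the $B$-equivalence type of the relevant intervals ending/beginning at $x$'' is not a well-defined finite datum: there are unboundedly many intervals with endpoint $x$, each with its own $B$-type, and even if you record just the \emph{set} of $B$-types appearing, matching that set at two points $Pre<Pre'$ tells you nothing about the valuation of intervals $[z,y]$ with $z<Pre$ and $y$ large---which is exactly what governs $\Tstarts$-formulas at $[0,1]$. Concretely, take $\varphi=\Tstarts p$ and let $p$ hold only at $[0,100]$: the points $10$ and $20$ can have identical $\mathcal R$-values and identical sets of incident $B$-types, yet folding with $Pre=10$, $Per=10$ destroys the unique $p$-witness and falsifies $\varphi$. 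Your ``absorb more prefix'' remedy handles this specific defect, but you never say what ``large enough'' means, and the analogous problem for intervals \emph{starting inside the period} (a $\Tstarts$-witness for some $[x,y]$ with $Pre\le x<Pre+Per$ that lies only far beyond $Pre+2Per$) is not touched by any point-colour match.

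The paper does not use pigeonhole on point colours at all. It fixes $Pre$ directly as the stabilisation point of $\mathcal R$ (via transitivity of $\later,\Tlater$), then chooses $Per>m_B$ large enough that (i) every $\later$-witness needed at points $\le Pre$ lies below $Pre+Per$, and (ii) for every interval $[x,y]$ with $x<Pre$ and $y\ge Pre+Per$, each $\Tstarts$-witness can be relocated into $[Pre,Pre+Per)$ with the same $B$-type. The step you are missing entirely is condition (iii): the analogous witness-relocation requirement for $x$ in the first period may \emph{fail} in $M$, and the paper does not fold around this---it \emph{rewrites} the valuation on finitely many intervals $[x,\cdot]$ so as to pull the witness points down below $Pre+2Per$, and only afterwards defines $V^{*}$ periodically from the modified $V'$. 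In other words, $M^{*}$ is in general not a fold of $M$ but of a surgically altered model; a pure pigeonhole-and-fold argument cannot produce it without this intermediate repair.
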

\begin{proof}
Let $M = \langle\mathbb{I}(\mathbb{N}), V\rangle$ be such that
$M, [0,1] \mmodels \varphi$. If $M$ is not ultimately periodic,
we turn it into an ultimately periodic model as follows. First,
by transitivity of $\later$ and $\Tlater$, there must exist
a point $\bar x>1$ such that $\mathcal{R}(y) = \mathcal{R}(\bar x)$
for every $y \geq \bar x$. We take $\bar x$ as the prefix $Pre$.
Then, we take as the period of the model a value $Per > m_B$ that
satisfies the following conditions:
\begin{inparaenum}[\it (i)]
\item for every point $x \leq Pre$ and formula $\later\psi
\in \mathcal{R}(x)$, there exists an interval $[x_\psi,y_\psi]$
such that $M,[x_\psi,y_\psi]\mmodels\psi$ and $x < x_\psi <
y_\psi < Pre + Per$;
\item for every interval $[x,y]$ such that $x < Pre$ and
$y \geq Pre + Per$ and every formula $\Tstarts\psi$ such that
$M,[x,y]\mmodels\Tstarts\psi$, there exists an interval
$[x,y_\psi]$ such that $[x,y]\equiv_B [x,y_\psi]$,
$M,[x,y_\psi]\mmodels\psi$, and $Pre\le y_\psi < Pre+Per$.
\end{inparaenum}
The existence of such a $Per$ is guaranteed by transitivity of
$\starts$ and $\Tstarts$. To force the model to be
periodic, the following additional condition is necessary:
\begin{inparaitem}[\it (iii)]
\item for every interval $[x,y]$ such that $Pre \leq x < Pre+Per$
and $y \geq Pre + 2Per$ and every formula $\Tstarts\psi$ such
that $M,[x,y]\mmodels\Tstarts\psi$, there exists an interval
$[x,y_\psi]$ such that $[x,y]\equiv_B [x,y_\psi]$, $M,[x,y_\psi]
\mmodels\psi$, and $y_\psi < Pre+2Per$.
\end{inparaitem}
If this is not the case, we can change the valuation $V$ to force
condition \emph{(iii)} to be satisfied as follows.
Let $[x,y]$ be an interval that does not satisfy condition \emph{(iii)}.
We choose a finite set of ``witness points'' $\{y_1 < \ldots < y_k\}$
such that (a) for every interval $[x,y']$ and every formula $\starts\psi$,
if $M,[x,y']\mmodels\starts\psi$, then there exists a witness point
$x < y_i < y'$ such that $M,[x,y_i]\mmodels\psi$, and (b) for every
interval $[x,y'']$ and every formula $\Tstarts\theta$, if $M,[x,y'']
\mmodels\Tstarts\theta$, then there exists a witness point $y_j$
such that $M,[x,y_j]\mmodels\psi$ and either $y_j > y''$ or $[x,y_j]
\equiv_B [x,y'']$.
By transitivity of $\starts$ and $\Tstarts$, it follows that
the number of witness points is less than or equal to $m_B$
(the number of $\starts$- and $\Tstarts$-formulas in
$Cl(\varphi)$).

We concentrate our attention on those witness points $\{y_j <
\ldots < y_k\}$ that are greater than $Pre + Per$, and we
turn $V$ into a new valuation $V'$ such that all intervals
starting at $x$ fulfills condition \emph{(iii)} as follows:
\begin{inparaenum}[\it (1)]
\item for every $p \in \mathcal{AP}$ and every $x < y' \leq Pre+Per$,
we put $[x,y'] \in V'(p)$ iff $[x,y']\in V(p)$;
\item for every $p \in \mathcal{AP}$ and every $j \leq i \leq k$,
we put $[x,Pre+Per+i] \in V'(p)$ iff $[x,y_i]\in V(p)$;
\item for every $p \in \mathcal{AP}$ and every $Pre+Per+k < y'
\leq y_k$, we put $[x,y'] \in V'(p)$ iff $[x,y_k]\in V(p)$;
\item the valuation of all other intervals remains unchanged.
\end{inparaenum}
Once such a rewriting has been completed, no other interval $[x,y']$
starting at $x$ can falsify property \emph{(iii)}. By repeating such
a procedure a sufficient number of times, we obtain a model for $\varphi$
that satisfies all the required properties (notice that
properties (1) and (2) are not affected by the rewriting).

The ultimately periodic model $M^{*} = \langle \mathbb{I}(\mathbb{N}),
V^{*}\rangle$ can be built as follows. First, we define the valuation
function $V^*$ for some intervals in the prefix and some intervals in
the first occurrence of the period:
\begin{inparaenum}[\it (1)]
\item for every $p \in \mathcal{AP}$ and every $[x,y]$ such that
$y < Pre+Per$, $[x,y]\in V^*(p)$ iff $[x,y]\in V'(p)$;
\item for every $p \in \mathcal{AP}$ and every $[x,y]$ such that
$Pre \leq x < Pre+Per$ and $y \leq x + Per$, $[x,y]\in V^*(p)$ iff $[x,y]\in V'(p)$.
\end{inparaenum}
Then, we extend $V^*$ to cover the entire model:
\begin{inparaenum}[\it (1)]
\item for every $p \in \mathcal{AP}$ and every $[x,y]$ such
that $x < Pre$ and $y \geq Pre+Per$, $[x,y]\in V^*(p)$ iff $[x,y-Per]\in V^*(p)$;
\item for every $p \in \mathcal{AP}$ and every $[x,y]$
such that $Pre \leq x < Pre+Per$ and $y > x+Per$, $[x,y]\in V^*(p)$
iff $[x,y-Per]\in V^*(p)$;
\item for every $p \in \mathcal{AP}$ and every $[x,y]$ such
that $x \geq Pre+Per$, $[x,y]\in V^*(p)$ iff $[x-Per,y-Per]\in V^*(p)$.
\end{inparaenum}
It is straightforward to prove that $M^*,[0,1] \mmodels \varphi$, and
thus  $M^{*}$ is the ultimately periodic model we were looking for.
\end{proof}
By applying a point-elimination technique similar to the one used
in~\cite{ecai2012} to show NP-membership of $\mathsf{B\overline
BL\overline L}$ over finite linear orders, we can reduce the
length of the prefix and the period of an ultimately periodic
model to a size polynomial in $|\varphi|$, as proved by
the following lemma.
\begin{lemma}\label{lem:bbll-bounded-periodic}
Let $\varphi$ be a $\mathsf{B\overline BL\overline L}$-formula.
Then, $\varphi$ is initially satisfiable over $\mathbb{N}$
iff it is initially satisfiable over an ultimately periodic model $M
= \langle\mathbb{I}(\mathbb{N}),V\rangle$, with prefix $Pre$ and period
$Per$, such that $Pre + Per \le (m_L+2) \cdot m_B + m_L +4$, where
$m_L=2|\mathcal R|$.
\end{lemma}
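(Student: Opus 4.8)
The plan is to adapt the point-elimination argument of~\cite{ecai2012} from finite orders to the ultimately periodic models of Lemma~\ref{lem:bbll-ultimately-periodic}. The right-to-left direction is immediate, since a model of the stated bounded form is in particular an interval model over $\mathbb{N}$. For the converse, if $\varphi$ is initially satisfiable over $\mathbb{N}$ then by Lemma~\ref{lem:bbll-ultimately-periodic} there is an ultimately periodic $M$, with some prefix $Pre$ and period $Per$, such that $M,[0,1]\mmodels\varphi$. I will repeatedly \emph{contract} $M$ --- pick points $x_1<x_2$, delete the points strictly between them, shift every point $\geq x_2$ down by $x_2-x_1$, and transport the valuation --- keeping the points $0$ and $1$ (so the evaluation interval is untouched), so that each stage is again an ultimately periodic model of $\varphi$, until $Pre+Per$ meets the bound.

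Two monotonicity facts, both instances of transitivity of the modalities, organize the contraction. Along $0<1<2<\cdots$, transitivity of $\later$ makes the $\later$-part of $\mathcal{R}_L(x)$ non-increasing and transitivity of $\Tlater$ makes the $\Tlater$-part of $\mathcal{R}_{\overline L}(x)$ non-decreasing, so $\mathcal{R}(x)$ changes at most $|\mathcal{R}|=m_L/2$ times and, being constant from $Pre$ on, the line splits into at most $m_L/2+1$ maximal \emph{$\mathcal{R}$-blocks}. Dually, for fixed left endpoint $x$, transitivity of $\starts$ and $\Tstarts$ makes the $\equiv_B$-class of $[x,y]$ change at most $m_B$ times as $y$ grows, so every $\starts$- and $\Tstarts$-obligation of $[x,\cdot]$ is met by a single ``witness right endpoint''. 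I contract only pairs $x_1<x_2$ inside one $\mathcal{R}$-block (hence $\mathcal{R}(x_1)=\mathcal{R}(x_2)$) with no \emph{essential} point in $(x_1,x_2)$, calling $z$ essential when $z\in\{0,1\}$, or $\mathcal{R}(z)\neq\mathcal{R}(z-1)$, or $z$ is one of the finitely many witness right endpoints kept to realize the $\starts$/$\Tstarts$ obligations of its stretch. A careful accounting --- charging each essential point to one of the $\leq m_L/2$ $\mathcal{R}$-changes, to one of the two initial points, or to one of the $\leq m_B$ witnesses inside one of the resulting stretches, and treating the prefix and the first period separately (the latter carrying the extra witnesses matching conditions (ii)--(iii) in the proof of Lemma~\ref{lem:bbll-ultimately-periodic}) --- bounds the number of essential points, hence of surviving points, by $(m_L+2)\cdot m_B+m_L+4$.

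It remains to check that a single contraction preserves $\varphi$; the natural correspondence between intervals of the contracted model and intervals of $M$ that avoid $(x_1,x_2)$ is claimed to be a $\mathsf{B\overline BL\overline L}$-bisimulation, so $\varphi$ is still satisfied at $[0,1]$ by invariance of formulas under bisimulation. Atomic harmony holds by construction; the $\later$/$\Tlater$ clauses for intervals lying wholly to one side of the deleted block follow from the monotonicity above together with $\mathcal{R}(x_1)=\mathcal{R}(x_2)$ --- whenever a $\later$-witness of such an interval would be deleted, equality of the $\mathcal{R}$-labels at $x_1$ and $x_2$ forces a surviving one at a point $\geq x_2$ (and symmetrically for $\Tlater$). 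The delicate clauses are $\starts$-forth for intervals straddling $(x_1,x_2)$, whose witnessing right endpoint may be deleted: to neutralize this I first re-lay $V$ exactly as in the proof of Lemma~\ref{lem:bbll-ultimately-periodic}, relocating, for each relevant left endpoint $x$, the at most $m_B$ witnesses of its $\starts$/$\Tstarts$ obligations into an initial stretch of right endpoints disjoint from $(x_1,x_2)$ --- $\Tstarts$-forth for straddling intervals is automatically safe, since those witnesses lie to the right of $x_2$ --- and the same surgery inside the period re-establishes periodicity after a contraction that shortens $Per$. Iterating until no non-essential point remains produces the required model.

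The main obstacle is exactly this correctness argument, which is harder than in the finite case of~\cite{ecai2012}: $\later$- and $\Tlater$-successors can lie arbitrarily far inside the periodic part, and deletions performed inside the period must be reconciled with periodicity, either by deleting only period-compatible amounts or by re-running the valuation surgery of Lemma~\ref{lem:bbll-ultimately-periodic}. The secondary, purely combinatorial, difficulty is to make the essential-point bookkeeping tight enough to land precisely at $(m_L+2)\cdot m_B+m_L+4$ rather than at a looser polynomial bound.
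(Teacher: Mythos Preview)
Your proposal follows essentially the same route as the paper: start from the ultimately periodic model of Lemma~\ref{lem:bbll-ultimately-periodic}, isolate a bounded set of protected points (the paper's \emph{blocked} points are the endpoints of extremal $\later$/$\Tlater$-witness intervals, together with $Pre$ and $Pre+Per$, and are then shown to delimit exactly your $\mathcal{R}$-constant blocks), delete a point from any over-long gap, repair broken $\starts$/$\Tstarts$-witnesses by locally shifting the valuation over the next $m_B$ right endpoints, iterate, and finally re-impose periodicity as in Lemma~\ref{lem:bbll-ultimately-periodic}. The paper removes one point $\bar x$ at a time and argues formula preservation directly rather than packaging it as a bisimulation, but these are presentational differences, not substantive ones.
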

\begin{proof}
By Lemma~\ref{lem:bbll-ultimately-periodic}, we can assume that
$\varphi$ is initially satisfied over an ultimately periodic model
$M = \langle\mathbb{I}(\mathbb{N}),V\rangle$. If $Pre + Per >
(m_L+2) \cdot m_B + m_L +4$, then we proceed as follows.

\sloppy
Let us consider all points $1<x<Pre+2Per$. For each $\psi\in Cl(\varphi)$
such that $\later\psi\in \mathcal R(x)$ for some $x$ in such a set, we select
$1 < x_{max}^{\psi}\leq Pre+Per$ and $y_{max}^{\psi}<Pre+2Per$ such that
$[x_{max}^{\psi},y_{max}^{\psi}]$ satisfies $\psi$ and for each
$x_{max}^{\psi}<x\leq Pre+Per$ no interval starting at $x$ satisfies
$\psi$. We collect all such points into a set (of $L$-{\em blocked} points)
$Bl_L \subset \{0,\ldots,Pre+2Per\}$.
Then, for each $\psi\in Cl(\varphi)$ such that $\Tlater\psi\in \mathcal R(x)$
for some $1<x<Pre+2Per$, we select an interval $[x_{min}^{\psi},y_{min}^{\psi}]$
that satisfies $\psi$ and such that for each $y<y_{min}^{\psi}$ no interval
ending at $y$ satisfies it.
We collect all points $x_{min}^{\psi}, y_{min}^{\psi}$ into a set
(of $\overline L$-{\em blocked} points) $Bl_{\overline  L} \subset
\{0, \ldots,Pre\}$.
Let $Bl=Bl_L\cup Bl_{\overline  L}\cup\{Pre,Pre+Per\}$. We have that
$|Bl|\le m_L+2$. Now, let us assume $Bl = \{x_1 < x_2 < \ldots < x_n\}$.
For each $0 < i < n$, let $Bl_i=\{x | x_{i} < x < x_{i+1}\}$; similarly,
let $Bl_0=\{x |0 < x < x_1\}$ and $Bl_n=\{x |x_n < x <Pre+2Per\}$.
We prove that if $y,y'\in Bl_i$, for some $i$, then $\mathcal R(y)=
\mathcal R(y')$. The proof is by contradiction. Let us assume $\mathcal R(y)
\neq \mathcal R(y')$. Since $\mathcal R(x)$ is the same for all points $x > Pre$ (it
immediately follows from periodicity), at least one between $y$ and 
$y'$ must belong to the prefix of $M$.
If $\later\psi\in \mathcal R(y)$ and $\later\psi\not\in \mathcal R(y')$, then, by 
definition, $\Glater\neg\psi\in \mathcal R(y')$. This implies that $y<y'$, 
as $\later$ is transitive. It immediately follows that $y < Pre$. 
Let us consider now the above-defined interval $[x_{max}^{\psi},
y_{max}^{\psi}]$. Two cases may arise: either $x_{max}^{\psi} < y$ 
or $x_{max}^{\psi} > y'$. In the former case, since $\later\psi\in \mathcal R(y)$, 
there must exist an interval $[x'',y'']$ satisfying $\psi$ and such 
that $x_{max}^\psi < x'' \leq y'$, thus violating the definition 
of $x_{max}^{\psi}$. 
In the latter case, $\Glater\neg\psi\not\in \mathcal R(y')$, against the 
hypothesis. The case in which $\Tlater\psi\in \mathcal R(y)$ and $\Tlater
\psi\not\in \mathcal R(y')$ can be proved in a similar way. 
Since by assumption $Pre + Per > (m_L+2) \cdot m_B + m_L +4$, by 
a simple combinatorial argument there must exist $x_{i+1} (\leq Pre 
+ Per)$ in $Bl$ such that $|Bl_i|>m_B$. Let $\bar x$ be the smallest
point in $Bl_i$. 
We show that we can build a model $M'=\langle\mathbb{I}(\mathbb{N}
\setminus\{\bar x\}),V'\rangle$, where $\bar x$ has been removed 
and $V'$ is a suitable adaptation of $V$, such that $M',[0,1]
\mmodels\varphi$. 

Let $M''=\langle\mathbb{I}(\mathbb{N}
\setminus\{\bar x\}),V''\rangle$, where $V''$ is the projection 
of $V$ over the intervals that neither start nor end at $\bar x$. 
By definition, replacing $M$ by $M''$ does not affect 
satisfaction of box-formulas (from $Cl(\varphi)$).
The only possible problem is the presence of some diamond-formulas 
which were satisfied in $M$ and are not satisfied anymore in $M''$. 
Let $[x,y]$, with $y<\bar x$, be such that $M,[x,y]\mmodels\later\psi$.
%
%
By definition of $Bl$, there exists an interval $[x_{max}^\psi,y_{max}^\psi]$,
with $x_{max}^\psi,y_{max}^\psi \in Bl$ and $x_{max}^\psi \leq Pre + Per$,
such that $\psi$ holds over $[x_{max}^\psi,y_{max}^\psi]$ and there exists no
interval $[x',y']$, with $x_{max}^\psi < x' \leq Pre + Per$, such that
$\psi$ holds over $[x',y']$. It follows that either $x_{max}^\psi > y$ 
or there exists an interval $[x',y']$ such that $M,[x',y'] \mmodels \psi$ 
and $x' > Pre + Per$. Therefore, $M'',[x,y]\mmodels\later\psi$.
A symmetric argument applies to the case of $\Tlater\psi$. Hence, the removal
of point $\bar x$ does not cause any problem with diamond-formulas of the 
forms $\later \vartheta$ or $\Tlater \vartheta$. 
Assume now that, for some $y<x<\bar x$ (resp., $y<\bar x<x$) and some 
formula $\Tstarts\psi$ (resp., $\starts\psi$) in $Cl(\varphi)$, it is 
the case that $M,[y,x]\mmodels\Tstarts\psi$ (resp., $M,[y,x]\mmodels
\starts\psi$) and that $[y,\bar x]$ was the only interval starting at 
$y$ (in $M$) satisfying $\psi$. Since $\bar x$ is the smallest point in 
$Bl_i$, $M,[y,x_i]\mmodels\Tstarts\psi$ (resp., $M,[y,x_{i+1}]\mmodels
\starts\psi$) by transitivity of $\Tstarts$ (resp., $\starts$). Consider 
now the first $m_B$ successors of $\bar x$: $\bar x+1,\ldots, \bar x+m_B$. 
Since $|Bl_i| > m_B$, we have that all those points belong to $Bl_i$. It 
is possible to prove that there exists a point among them, say, $\bar x+k$, 
that satisfies the following properties:
\begin{inparaenum}[\it (i)]
\item for every $\starts\xi\in Cl(\varphi)$, if $M,[y,\bar x + k + 1]
\mmodels \starts\xi$, then $M,[y,\bar x + k]\mmodels \starts\xi$, and
\item for every $\Tstarts\zeta\in Cl(\varphi)$, if $M,[y,\bar x + k - 1]
\mmodels \Tstarts\zeta$, then $M,[y,\bar x + k]\mmodels \Tstarts\zeta$.
\end{inparaenum}
To prove it, it suffices to observe that, by the transitivity of $\starts$, 
if $M,[y,\bar x + k + 1]\mmodels \starts\xi$ then $M,[y,x']\mmodels \starts\xi$ 
for every $x'\geq\bar x + k +1$. Hence, if $\bar x + k$ does not satisfy 
property \textit{(i)} for $\xi$, all its successors are forced to satisfy 
it for $\xi$. Symmetrically, by the transitivity of $\Tstarts$, if $M,[y,
\bar x + k - 1]\mmodels \Tstarts\zeta$, but $M,[y,\bar x + k]\not\mmodels 
\Tstarts\zeta$, then  $M,[y,x']\not\mmodels \Tstarts\zeta$ for every $x'\geq
\bar x + k$. Hence, all successors of $\bar x + k$ trivially satisfy 
property \textit{(ii)} for $\zeta$. 
Since the number of $\starts$- and $\Tstarts$-formulas is limited by $m_B$, 
a point with the required properties can always be found. We fix the 
defect by defining the labeling $V'$ as follows: we put $[y,\bar x+h]\in V'(p)$ 
if and only if $[y,\bar x+h-1]\in V(p)$, for every proposition letter $p$ 
and $1 \leq t \leq h$. The labeling of the other intervals remain unchanged. 
By definition of $Bl$, it follows that this change in the labeling does not 
introduce any new defect.

By iterating the above-described operation, we obtain an interval model
$\overline{M}=\langle \mathbb{I}(\mathbb{N}), \overline{V}\rangle$, with $Pre + Per \le 
(m_L+2) \cdot m_B + m_L +4$. However, since all changes that we did so 
far are limited to the portion of the model in between $0$ and $Pre+2Per$,
we are not guaranteed that $\overline{M}$ is actually a model for $\varphi$.
To turn it into a model for $\varphi$, we must propagate the changes to 
the rest of the interval model. We proceed as in the proof of Lemma
\ref{lem:bbll-ultimately-periodic}, building an ultimately periodic 
model $M^{*} = \langle \mathbb{I}(\mathbb{N}), V^{*}\rangle$ as follows:
\begin{inparaenum}[\it (i)]
	\item for every $p \in \mathcal{AP}$ and every $[x,y]$ such that $y \leq Pre+Per$, $[x,y]\in V^*(p)$ iff $[x,y]\in \overline{V}(p)$;
	\item for every $p \in \mathcal{AP}$ and every $[x,y]$ such that $Pre < x \leq Pre+Per$ and $y \leq x + Per$, $[x,y]\in V^*(p)$ iff $[x,y]\in \overline{V}(p)$;
	\item for every $p \in \mathcal{AP}$ and every $[x,y]$ such that $x \leq Pre$ and $y > Pre+Per$, $[x,y]\in V^*(p)$ iff $[x,y-Per]\in V^*(p)$;
	\item for every $p \in \mathcal{AP}$ and every $[x,y]$ such that $Pre < x \leq Pre+Per$ and $y > x+Per$, $[x,y]\in V^*(p)$ iff $[x,y-Per]\in V^*(p)$;
	\item for every $p \in \mathcal{AP}$ and every $[x,y]$ such that $x \geq Pre+Per$, $[x,y]\in V^*(p)$ iff $[x-Per,y-Per]\in V^*(p)$.
\end{inparaenum} This concludes the proof.
\end{proof}

%
%
%

\section{NEXPTIME- and EXPSPACE-Completeness}
\label{sec:NEXPEXSP}
The cases of NEXPTIME-complete and EXPSPACE-complete fragments
have been already fully worked out. In the following, we briefly
summarize them.
NEXPTIME-membership of $\mathsf{A\overline A}$ has been proved
in~\cite{Bresolin08b}, while NEXPTIME-hardness of $\mathsf{A}$
over $\mathbb N$ has been shown in~\cite{bresolin07b}.
It is immediate to show that the latter result holds also
for the class of strongly discrete linear orders; moreover,
it can be easily adapted to the case of $\mathsf{\overline A}$,
thus proving NEXPTIME-hardness of any \hs fragment featuring
$\meets$ or $\Tmeets$.
As for EXPSPACE-complete fragments, we know from~\cite{ijfcs2012} that
$\mathsf{AB\overline B\overline L}$ is EXPSPACE-complete.
In~\cite{DBLP:conf/stacs/MontanariSS10}, Montanari et al.\ prove
EXPSPACE-hardness of the fragment $\mathsf{AB}$ over $\mathbb N$
by a reduction from the exponential-corridor tiling problem, which
is known to be EXPSPACE-complete~\cite{convenience_of_tilings}.
The reduction immediately applies to the case of strongly discrete
linear orders. Moreover, it can be easily adapted to the
fragment $\mathsf{A\overline B}$ (a similar adaptation
has been provided for finite linear orders in~\cite{ecai2012}).
Given a tuple $\mathcal T=(T,t_\bot,t_\top,H,$ $V,n)$, where
$T$ is a finite set of tile types, $t_\bot\in T$ is the bottom tile,
$t_\top\in T$ is the top tile, $H$ and $V$ are two binary relations
over $T$, that specify the horizontal and vertical constraints, and
$n \in \mathbb N$, the exponential-corridor tiling problem consists
of deciding whether there exists a tiling function $f$ from a
discrete corridor of height exponential in $n$ to $T$ that associates
the tile $t_\bot$ (resp., $t_\top$) with the bottom (resp., top) row
of the corridor and that satisfies the horizontal and vertical constraints
$H$ and $V$.
The reduction exploits the correspondence between the points inside
the corridor and the intervals of the model.
It makes use of $|T|$ proposition letters to represent the tiling
function $f$; moreover, a binary encoding of each row of the
corridor is provided by means of additional proposition letters; finally,
local constrains on the tiling function $f$ are enforced by using
modalities.

%
%
%

\section{Decidability and Complexity over \texorpdfstring{$\mathbb N$}{N}}

As we already pointed, the asymmetry of $\mathbb N$-models,
which are left-bounded and right-unbounded, is reflected in the
computational behavior of (some of) the fragments of $\mathsf{A\overline AB\overline B}$
and its mirror image $\mathsf{A\overline AE\overline E}$.
More precisely: 
\begin{inparaenum}[\it (i)]
\item $\mathsf{\overline AB}$, but not $\mathsf{AE}$, becomes decidable (non-primitive
recursive)~\cite{DBLP:conf/icalp/MontanariPS10};
\item $\mathsf{\overline A\overline B}$ and $\mathsf{\overline AB\overline B}$, but
not $\mathsf{A\overline E}$ nor $\mathsf{AE\overline E}$, become decidable (this can
be shown by a suitable adaptation of the argument given
in~\cite{DBLP:conf/icalp/MontanariPS10});
\item $\mathsf{\overline ABL}$ and $\mathsf{\overline A\overline BL}$ remain undecidable,
but the proof given in \cite{DBLP:conf/icalp/MontanariPS10} must
be suitably adapted.
\end{inparaenum}

\begin{figure*}[tbp]
   \centering
		\input{nat_dec_fragm}
   \caption{Hasse diagram of all fragments of $\mathsf{A\overline AB\overline B}$ and $\mathsf{A\overline AE\overline E}$ over the natural numbers.}\label{fig:expre_nat}
\end{figure*}

\begin{theorem}
The Hasse diagram in Figure~\ref{fig:expre_nat} correctly shows all the decidable
fragments of \hs\ over $\mathbb N$, their relative expressive power, and the
precise complexity class of their satisfiability problem.
\end{theorem}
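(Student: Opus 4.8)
The plan is to read Figure~\ref{fig:expre_nat} as Figure~\ref{fig:expre} with an unchanged Hasse diagram and only three relabelled nodes, $\mathsf{\overline AB}$, $\mathsf{\overline A\overline B}$ and $\mathsf{\overline AB\overline B}$, which move from ``undecidable'' to ``non primitive recursive''; then to establish \emph{(a)} that the expressive-power layer of the picture is unchanged when one restricts to $\mathbb N$, \emph{(b)} that every other fragment keeps its complexity, and \emph{(c)} that those three fragments become decidable and non-primitive recursive. For \emph{(a)}, the definability equations $\later p\equiv\meets\meets p$ and $\Tlater p\equiv\Tmeets\Tmeets p$ are valid over $\mathbb N$ (a special case of Lemma~\ref{lem:definability-equations}), while the strictness of the arrows and the incomparabilities of Figure~\ref{fig:expre} (conditions \emph{(i)}--\emph{(ii)} of Section~\ref{sec:expr_undec}) transfer to $\mathbb N$ because the bisimulation witnesses of~\cite{ecai2012,DBLP:conf/ijcai/MonicaGMS11} can be realized over $\mathbb N$-models --- e.g.\ by extending the finite witnesses with an all-false right tail and extending the bisimulations over it --- so that no new definability appears over $\mathbb N$ and the expressiveness layer is exactly that of Figure~\ref{fig:expre}.

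For \emph{(b)} I would proceed block by block, observing each time that the upper-bound procedure applies to (indeed produces) $\mathbb N$-models and that the matching lower bound was either already proved over $\mathbb N$ or transfers verbatim. NP block ($\mathsf{B\overline BL\overline L}$ and its sub-fragments, on both sides): membership is exactly Lemma~\ref{lem:bbll-bounded-periodic}, stated and proved over $\mathbb N$, and NP-hardness is inherited from propositional logic. NEXPTIME block ($\mathsf{A\overline A}$, $\mathsf A$, $\mathsf{\overline A}$, $\mathsf{A\overline L}$, $\mathsf{\overline AL}$): membership from~\cite{Bresolin08b}, and NEXPTIME-hardness of $\mathsf A$ over $\mathbb N$ from~\cite{bresolin07b}, adapted to $\mathsf{\overline A}$ as in Section~\ref{sec:NEXPEXSP}. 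EXPSPACE block ($\mathsf{AB\overline B\overline L}$, $\mathsf{\overline AE\overline EL}$ and the sub-fragments in the yellow region): membership from~\cite{ijfcs2012}, and EXPSPACE-hardness of $\mathsf{AB}$ over $\mathbb N$ from~\cite{DBLP:conf/stacs/MontanariSS10}, the remaining hardness claims following, as in Section~\ref{sec:NEXPEXSP}, by adapting~\cite{ecai2012} and checking that the corridor used there fits into an $\mathbb N$-model. Finally, every \hs\ fragment that is not drawn in Figure~\ref{fig:expre_nat} is undecidable over $\mathbb N$: for $\mathsf D$, $\mathsf{\overline D}$, $\mathsf O$, $\mathsf{\overline O}$ and the $\mathsf{BE}$-style fragments this follows from~\cite{DBLP:conf/time/BresolinMGMS11,DBLP:conf/lics/MarcinkowskiM11} and the definability equations recalled in the introduction (the reductions there being realizable over $\mathbb N$), and undecidability then propagates to every fragment having one of these as a sub-fragment.

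For \emph{(c)} it remains to deal with the red fragments of Figure~\ref{fig:expre_nat} together with the three that change colour. The reduction of Section~\ref{sec:expr_undec} was written for $\mathsf{AE}$ by growing the chain of configurations \emph{rightwards} from the anchor interval, hence its model can be taken over $\mathbb N$; so $\mathsf{AE}$, $\mathsf{A\overline E}$ and every fragment containing one of them as a sub-fragment ($\mathsf{AE\overline E}$, $\mathsf{AE\overline L}$, $\mathsf{A\overline E\overline L}$, $\mathsf{AE\overline E\overline L}$, $\mathsf{A\overline AE}$, $\mathsf{A\overline A\overline E}$, $\mathsf{A\overline AE\overline E}$) are undecidable over $\mathbb N$. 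The crux on the $\mathsf{\overline A}/B$ side is to show that $\mathsf{\overline ABL}$, $\mathsf{\overline A\overline BL}$ and $\mathsf{\overline AB\overline BL}$ stay undecidable over $\mathbb N$: here the reduction of Section~\ref{sec:expr_undec} (and of~\cite{DBLP:conf/icalp/MontanariPS10,sala}) must be re-engineered so that the navigation through a configuration, previously carried out with $\meets$, is instead expressed using $\later$ and its dual $\Glater$ together with the $B$- and $\overline B$-style modalities of the fragment, exploiting the fact that on a left-bounded, right-unbounded order $\later$ still lets one single out ``the next configuration''; undecidability over $\mathbb N$ of $\mathsf{A\overline AB}$, $\mathsf{A\overline A\overline B}$ and $\mathsf{A\overline AB\overline B}$ then comes for free, since each of them defines $\later$ via $\meets\meets$ and is therefore at least as expressive as, respectively, $\mathsf{\overline ABL}$, $\mathsf{\overline A\overline BL}$, $\mathsf{\overline AB\overline BL}$. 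On the positive side, decidability of $\mathsf{\overline AB}$ over $\mathbb N$ is a direct consequence of~\cite{DBLP:conf/icalp/MontanariPS10}, with the non-primitive recursive lower bound from~\cite{ecai2012}; for $\mathsf{\overline A\overline B}$ and $\mathsf{\overline AB\overline B}$ one adapts the decidability argument of~\cite{DBLP:conf/icalp/MontanariPS10}, reworking the finite abstraction of models that guarantees termination so as to account for the extra modality $\mathsf{\overline B}$ (and the simultaneous absence of $\meets$), the lower bounds coming again from~\cite{ecai2012}. The two genuinely new steps --- and hence the main obstacles --- are precisely this adaptation of the decision procedure to $\mathbb N$ and the $\later$-based rewriting of the counter-machine reductions; everything else is either already established over $\mathbb N$ in the cited literature or a routine specialisation of the arguments of Sections~\ref{sec:expr_undec}--\ref{sec:NEXPEXSP}.
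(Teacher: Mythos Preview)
Your overall decomposition coincides with the paper's: the Hasse diagram and all complexity labels except those of $\mathsf{\overline AB}$, $\mathsf{\overline A\overline B}$, $\mathsf{\overline AB\overline B}$ carry over from the strongly-discrete case, and the real work is (i) decidability of $\mathsf{\overline AB\overline B}$ over $\mathbb N$ and (ii) undecidability of $\mathsf{\overline ABL}$, $\mathsf{\overline A\overline BL}$ over $\mathbb N$. On both points, though, the paper's concrete argument is different from what you sketch. For (i), the paper does not ``rework the finite abstraction to account for the extra modality $\mathsf{\overline B}$''; $\Tstarts$ is already handled in~\cite{DBLP:conf/icalp/MontanariPS10}. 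The key observation is instead that from $[x_\varphi,y_\varphi]$ the modalities $\Tmeets,\starts,\Tstarts$ can never reach an interval $[x,y]$ with $x>x_\varphi$, so the valuation on all such intervals is irrelevant. This one-sided reachability lets one restrict to ultimately periodic models with period $\le m_B$, guess a finite prefix with the $\mathsf{A\overline AB\overline B}$ finite-order algorithm, and then extend it using the $\mathsf{AB\overline B}$-over-$\mathbb N$ algorithm; termination follows from the bound on minimal prefixes.

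For (ii), the paper does not re-engineer the incrementing-automaton reduction of Section~\ref{sec:expr_undec}; it switches to the \emph{structural termination problem for lossy counter automata} and, crucially, observes that the universal modality is definable in $\mathsf{\overline ABL}$ and $\mathsf{\overline A\overline BL}$ as $[U]\psi=\psi\wedge\Glater(\TGmeets\psi\wedge\TGmeets\TGmeets\psi)$. With $[U]$ in hand one can replay the construction of~\cite{DBLP:conf/icalp/MontanariPS10} verbatim, using $\later$ only to impose the condition on final states. Your intuition that ``$\later$ lets one single out the next configuration'' is misleading: $\later$ is not a next-time operator, and local navigation is still carried out with $\Tmeets$ and $\starts$/$\Tstarts$; what $\later$ actually buys is the global $[U]$ and the infinitely-often requirement.
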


The main ingredients of the decidability proof for
$\mathsf{\overline AB\overline B}$ (and thus for
$\mathsf{\overline A\overline B}$ and $\mathsf{\overline AB}$)
can be summarized as follows. Let $\varphi$ be a satisfiable
$\mathsf{\overline AB\overline B}$-formula and let $M =
\langle\mathbb I(\mathbb N), V\rangle$ be a model such that
$M,[x_\varphi,y_\varphi] \mmodels \varphi$ for some interval
$[x_\varphi,y_\varphi]$. It can be easily checked that modalities
$\Tmeets$, $\starts$, and $\Tstarts$ do not allow one to access
any interval $[x,y]$, with $x>x_\varphi$, starting from $[x_\varphi,
y_\varphi]$, and thus valuation over such intervals can be safely
ignored.
%
%
By exploiting such a limitation,
%
%
we can reduce the search for a model of $\varphi$ to a set of
ultimately periodic models only, as it is possible to prove that,
for each satisfiable $\mathsf{\overline  AB\overline B}$-formula,
there exist an ultimately periodic model $M^{*} = \langle
\mathbb{I}(\mathbb{N}), V^{*}\rangle$ and an interval $[x_\varphi,y_\varphi]$
such that $M, [x_\varphi,y_\varphi] \mmodels \varphi$, $y_\varphi < Pre$,
and $Per \leq m_B$, where $m_B$ is the number of $\starts$- and
$\Tstarts$-formulas in $Cl(\varphi)$.
To guess the non-periodic part of the model, the algorithm for satisfiability
checking of $\mathsf{A\overline AB\overline B}$ formulas over finite
linear orders can be used \cite{DBLP:conf/icalp/MontanariPS10}.
Then, the algorithm for satisfiability checking of $\mathsf{AB\overline B}$
formulas over $\mathbb{N}$~\cite{DBLP:conf/stacs/MontanariSS10}
can be applied to check whether the guessed prefix can be extended to
a complete model over $\mathbb{I}(\mathbb{N})$ by guessing the valuation
of intervals $[x,y]$ with $x < Pre$ and $Pre \leq y \leq Pre + Per$.
To prove termination of the algorithm, it suffices to observe that
if the guessed prefix is not \emph{minimal} (in the sense
of~\cite{DBLP:conf/icalp/MontanariPS10}), we can shrink it into
a smaller one that satisfies the minimality condition (see Proposition
2 and Figure 3 in~\cite{DBLP:conf/icalp/MontanariPS10}).
Since the number of minimal prefix models is bounded, and so is the length of
the period, we can conclude that the satisfiability
problem for $\mathsf{\overline AB\overline B}$ over $\mathbb N$ is
decidable. Non-primitive recursiveness has been already shown in~\cite{ecai2012}.

In a very similar way, it is not difficult to adapt the reduction given
in~\cite{DBLP:conf/icalp/MontanariPS10} to prove the undecidability
of $\mathsf{\overline ABL}$ and $\mathsf{\overline A\overline BL}$
over $\mathbb N$. In this case, we reduce the structural termination
problem for lossy counter automata~\cite{lossy} to the satisfiability
problem for $\mathsf{\overline ABL}$ and $\mathsf{\overline A\overline BL}$.
Since the universal modality $[U]$ can be expressed in $\mathsf{\overline ABL}$
and $\mathsf{\overline A\overline BL}$ as
$[U]\varphi=\varphi\wedge\Glater(\TGmeets\varphi\wedge\TGmeets\TGmeets\varphi)$,
one can repeat the entire construction from~\cite{DBLP:conf/icalp/MontanariPS10}
to encode an infinite computation of the lossy counter automata, using $\later$
to impose the required properties on final states.

\smallskip

\noindent{\bf Acknowledgments.} We would like to thank the Spanish
MEC projects TIN2009-14372-C03-01 and RYC-2011-07821 (G. Sciavicco), the Icelandic
Research Fund project Processes and Modal Logics number 100048021 (D.
Della Monica), and the Italian PRIN project Innovative and multi-disciplinary approaches
for constraint and preference reasoning (A. Montanari and D. Della Monica).

\bibliographystyle{eptcs}
\bibliography{gandalf2012}
\end{document}